\newtheorem{theorem}{Theorem}
\newcounter{exam}
\newcounter{rema}
\DeclareFontFamily{OT1}{pzc}{}
\DeclareFontShape{OT1}{pzc}{m}{it}%
              {<-> s * [1.1] pzcmi7t}{}
\DeclareMathAlphabet{\mathpzc}{OT1}{pzc}%
                                 {m}{it}
\begin{document}
\captionsetup[figure]{name={Fig.},font=footnotesize,labelsep=period}
\title{Optimal Content Caching and Recommendation with Age of Information}

\author{Ghafour~Ahani and 
        Di~Yuan,~\IEEEmembership{Senior Member,~IEEE}

\IEEEcompsocitemizethanks{\IEEEcompsocthanksitem G. Ahani and D. Yuan are with the Department
of Information Technology, Uppsala University, 751 05 Uppsala, Sweden (e-mails: \{ghafour.ahani, di.yuan\}@it.uu.se).\protect\\
}
}
\IEEEtitleabstractindextext{%
\begin{abstract}
Content caching at the network edge has been considered an effective way of mitigating backhaul load and improving user experience. 
Caching efficiency can be enhanced by content recommendation and by keeping the information fresh.
To the best of our knowledge, there is no work that
jointly takes into account these aspects.
By content recommendation, a requested content that is not in the cache
can be alternatively satisfied by a related cached content recommended by the system. 
Information freshness can be quantified by age of
information (AoI). We address, optimal scheduling of cache updates
for a time-slotted system accounting for content recommendation and AoI.
For each content, there are requests that need to be satisfied, and there is a cost function capturing the freshness of information.
We present the following contributions.
First, we prove that the problem is NP-hard. 
Second, we derive an integer linear formulation, by which the optimal solution can be obtained for small-scale
scenarios.
Third, we develop an algorithm based on Lagrangian decomposition. Fourth, we develop efficient algorithms for solving the resulting subproblems.
Our algorithm computes a bound that can be used to evaluate the performance of any suboptimal solution. 
Finally, we conduct simulations to show the effectiveness of our algorithm in comparison to a greedy schedule.
\end{abstract}

\begin{IEEEkeywords}
Age of information, caching, content recommendation, Scheduling
\end{IEEEkeywords}}
\maketitle
\IEEEpeerreviewmaketitle

\section{Introduction}
Content caching at network edge, such as base stations, is a promising solution to deal with the explosively increasing traffic demand and to improve user experience~\cite{2016_liu}.
This approach is beneficial for both the users and the network
operators as the former can access the content
at a reduced latency, and latter can alleviate the load on backhaul links. The performance of
edge caching, however, can be further improved by utilizing content recommendation and optimizing information freshness.


Originally, recommender systems have been used for presenting
content items that best match user interests and preferences.
In fact, the reports in \cite{2016_carlos,2010_Zhou} show that $80\%$ of requests on content distribution platforms are due to content recommendations. Recently, a number of studies have proposed to utilize content recommendation for improving caching efficiency.
In~\cite{2019_chat,2017_chat}, recommendation is utilized to 
steer user requests toward the contents that are both stored in the cache and of interest to users. 
More recently, content recommendation is employed to satisfy content requests using alternative and related contents.
Namely, instead of the initially requested content that is
absent from the cache, some other related contents are recommended~\cite{2018_sermpezis,2018_song,2020_Costantini}.
This approach is of interest to many applications such as video and image retrieval, and entertainment-based ones~\cite{2018_sermpezis}.

Another important aspect that arises naturally in the context of content caching 
is the freshness of information~\cite{2017_yates}. As cached contents may become
obsolete with time, we need to also account for updating the content items.
Information freshness is quantified by age of information (AoI) which is defined as the amount of time elapsed with respect to the time stamp
of the information in the most recent update~\cite{2012_kaul}. The AoI grows linearly between two successive updates.

In this study, we address optimal  scheduling  for updating the  cache for  a  time-slotted  system where  content  recommendation  and  AoI are jointly accounted  for. The cache has a capacity limit, and the content items vary by size. Moreover, updating the cache in a time
slot is subject to a network capacity limit.
 For a content request, if the content is available in the cache, the request is served using the stored content. Otherwise,
a set of related and cached contents will be recommended. If one of the recommended contents
is accepted, then the request will be again served from the cache with the accepted content. If not, the request will be served by the remote server with a higher cost.
It is worth noting that incentive mechanisms may be utilized to motivate users to accept the recommended contents (e.g., zero-rating services)~\cite{2018_sermpezis}.
For each content item, there is a cost function that is monotonically increasing in the AoI. Thus, caching a content with higher AoI
results in a higher cost.

The optimization decision consists of, the selection of the content items
for updating the cache, and a recommendation set for each non-cached content. The objective is to find the schedule
minimizing the total cost over the scheduling horizon.

Our work consists in the following contributions for the outlined
cache optimization problem with recommendation and AoI (COPRA):

\begin{itemize}
    \item We rigorously prove the NP-hardness of the problem, even when contents are of uniform size, based on a reduction from 3-satisfiability (3-SAT).
    \item We derive an integer linear programming (ILP) formulation for the problem in its general form, enabling the use of
        general-purpose optimization solvers to approach the problem. This is particularly useful for solving small-scale problem instances
        to optimality, and to enable to accurately evaluate low-complexity though sub-optimal algorithms.
    \item For problem solving, we apply Lagrangian decomposition to the ILP, allowing for decomposing the problem
    into two subproblems, each with special structures. The first subproblem itself further decomposes into smaller problems, each of which can be mapped to finding a shortest path in a graph. 
    The second subproblem also decomposes to smaller problems. However, the problem size remains exponential, and therefore we propose column generation for gaining optimality.
     Moreover, we demonstrate that the pricing problem of column generation can be solved via dynamic programming (DP). It is also worth noting that, decomposition enables parallel computation. In addition, our algorithm computes a lower bound (LBD) that can be
    used to evaluate the quality of any given solution.
    \item Finally, we conduct extensive simulations to evaluate the performance of our algorithm by comparing its solution to global optimum for small-scale scenarios,  and to the LBD otherwise. The evaluations show that
    our algorithm provides solutions within $8\%$ of global optimality.
    
\end{itemize}

\section{Related Work}
To the best of our knowledge there is no work that jointly study content caching, recommendation, and AoI.
In the following, we first review the works that have studied content caching and AoI, and then those on caching and recommendation.

The works in~\cite{2017_yates,2018_zhangshan,2107_kam,2020_ahani_icc,2020_tang,2020_ahani_tmc,2020_bastopcu,2021_bastopcu} have studied content caching when AoI is accounted for.
The general problem setup in these works is what contents to cache and when to update them with an objective function based on AoI.
In~\cite{2017_yates,2018_zhangshan} the objective is minimizing the expected AoI when inter-update intervals of each item or
the total number of updates are known. In~\cite{2107_kam,2020_ahani_icc,2020_tang} content caching is studied where both popularity
and AoI are considered. In~\cite{2107_kam}, cache miss is minimized, and in \cite{2020_ahani_icc} the load of backhaul link is minimized via balancing
the AoI and cache updates. In~\cite{2020_tang}, partially updating a content, which depends on the type of content and its AoI, is enough to completely update the content. In~\cite{2020_ahani_tmc}, the overall utility of a cache defined based on AoI of contents is maximized, subject to limited cache and backhaul link capacities. 
For a given origin, a set of users, and a (set of) cache between them, the AoI at the cache(s) and users is analyzed in~\cite{2021_bastopcu}.
As an extension of~\cite{2021_bastopcu}, the work in \cite{2020_bastopcu} studied the trade-off between obtaining a 
content from the origin with longer transmission time and from the cache with higher AoI. A recent survey of AoI can be found in \cite{2021_yates}.

In general, the works that studied content caching and recommendation can be classified into two categories.
In the first category, content recommendation is utilized to shape the requests and steer the content 
demand toward the contents that are both stored in the cache and interesting to the users~
\cite{2017_chat,2019_chat,2020_Savvas,2020_tsigkari,2020_fu,2021_fu,2019_guo,2021_gian,2019_dong}.
In~\cite{2017_chat,2019_chat} a preference ``distortion'' tolerance measure is used to quantify how much
the engineered recommendations distort the original user content preferences. 
In~\cite{2020_Savvas}, an experiment is conducted to
demonstrate the effect of content recommendation on caching efficiency in practice.
In~\cite{2020_tsigkari}, the objective is to maximize both the quality of recommendation and streaming rate, and the authors proposed a polynomial-time algorithm with approximation guarantee. In~\cite{2020_fu,2021_fu}, caching and recommendation decisions are optimized based on the preference distribution of individual users.
In~\cite{2019_guo}, content caching and recommendation are optimized taking into account the temporal-spatial variability of user requests.
In~\cite{2021_gian}, the authors study the fairness issues of recommendation where some contents get more visible than others by recommendation.
In~\cite{2019_dong}, reinforcement learning is utilized for learning user behavior and optimizing caching and recommendation.

In the second category of studies,
recommendation is utilized to satisfy a request when the requested content is not available in the cache, by
recommending some other cached and related contents~\cite{2018_sermpezis,2020_Costantini,2020_Costantini1,2018_song,2020_Garetto,2020_zhou}. The idea of recommending related contents in case of a cache miss is formally introduced in \cite{2018_sermpezis} where the authors referred to the scenario as ``soft cache hit''.
In this reference, the authors illustrate how ``soft cache hit'' is able to improve the caching performance. 
They also consider a caching problem with the objective of maximizing the cache hit rate where all contents in the cache can be recommended. 
Using the submodularity property of the objective function, they propose algorithms with performance guarantee. Later,
in \cite{2020_Costantini}, the authors consider a more realistic system model in which only a limited number of contents can be recommended.
Then, they propose a polynomial-time algorithm based on first solving the caching problem, and then finding the recommendations sets.
In~\cite{2020_Costantini1}, the authors model the relation among contents as a graph, and then studied the characteristics of this graph to
predict whether it is worth to find the optimal solution or a low complexity heuristic will be sufficient. 
In~\cite{2020_Garetto}, the authors try to find the best caching  policy for a sequence of requests where recommendation is accounted for.
In~\cite{2020_zhou} a multi-hop cache network
is studied where soft cache hit is allowed in one of the caches along the path to the end node that stores the initially requested content.

The closest works to our study are~\cite{2018_sermpezis,2020_Costantini} in the sense that they also considered soft cache hits. 
However, there are significant differences. To the best our knowledge, it is novel that that caching decision, 
content recommendation, and information freshness are jointly
optimized. Moreover, in our work we account for cache update costs, 
as well as the capacities of cache and backhaul links.

\section{System Scenario and Complexity Analysis}

\subsection{System Scenario}\label{System_Scenario}
 The system scenario consists of a content server, a base station (BS), 
 and a set of content items $\mathcal{I}=\{1,2,\dots,I\}$. 
 The  server has all the contents, and the BS is equipped with a cache of capacity $S$. 
 The BS is connected to the content server with a communication link 
 of capacity $L$ via which the cache contents can be updated.
 The size of content item $i\in \mathcal{I}$ is denoted by $s_i$.

We consider a time-slotted system with a time period of $T$ time slots, 
denoted by $\mathcal{T}=\{1,2,\dots,T\}$. 
At the beginning of each time slot, the contents of the cache are subject to updates.
Namely, some stored contents may be removed from the cache, 
some new contents may be added to the cache by downloading from the server,
and some existing contents may be refreshed.

The AoI of an item in the cache is the time difference between the current slot
and the time slot in which the item was most recently downloaded to the cache.
Each time an item is downloaded to the cache, the item’s AoI is zero, i.e., maximum information freshness.
The AoI then increases by one for each time slot, until the next update.
In other words, the AoI of any cached content item is linear in time, if the content is not refreshed.
For content $i\in\mathcal{I}$, the relevant AoI has a limit $A_i$. 
The content is considered obsolete if the AoI exceeds $A_i$.
Hence, a cached content $i$ in time slot $t$ can take one of the AoIs in $\mathcal{A}_{ti}=\{0,...,\min(A_i,t-1)\}$.
The cost associated with content item $i$ with AoI $a$ in time slot $t$ is characterized by a cost function $f_{tia}$ that is monotonically increasing
in AoI~$a$.

For a request of content~$i$, if the content is stored in the cache and the AoI is no more than $A_i$, the request is satisfied from the cache.
Otherwise, a set of related cached contents, hereinafter referred to as a recommendation set, is recommended to the user. 
If the user accepts any element of the recommendation set, the request is satisfied by the cache.
If not, the request needs to be satisfied from the server. Note that since a user
may not be interested in getting a long list of recommended contents, we limit
the size of recommendation set to be at most $N$~\cite{2018_liu,2020_Costantini}.
Denote by $\mathcal{R}_i=\{1,2,...,R_i\}$ the index set of all contents related to content $i$. 
This set can be determined from past statistics and/or learning algorithms~\cite{2018_sermpezis}.
Obviously, the index set of any recommendation set for content $i$ is a subset of $\mathcal{R}_i$.
Note that the recommendation set may change from a time slot to another.

Denote by $h_{ti}$ the number of requests for content $i\in \mathcal{I}$ in time slot $t\in \mathcal{T}$. 
The value of $h_{ti}$ can be estimated via recent requests of the contents, popularity of the contents, and/or machine learning algorithms~\cite{2018_sermpezis,2018_zhang}.
In this study, for the ease of exposition, we consider the total number of requests for a content instead of individual user requests. Similarly, 
the acceptance probability of a content does not vary from a user to another.
Note that
individual user requests and acceptance probability can be easily accommodated in our formulations and algorithms. Denote by $c_s$ and $c_b$ the costs for downloading one unit of data from the server and from the cache to a user, respectively. Downloading cost from server to cache is $c_s-c_b$. Intuitively, $c_s>c_b$ to encourage downloading from the cache.

The cache optimization problem with recommendation and AoI, or COPRA in short,
is to determine which content items to store, update, and recommend
in each time slot, such that the total cost of content requests over time horizon
$1,2,...,T$ is minimized, subject to cache and backhaul link capacities.

\subsection{Cost Model}
Denote by $x_{tia}$ a binary optimization variable that equals one if and only if content $i$ with AoI $a$ is stored in the cache at time slot $t$. Hence, $x_{ti0}=1$ means that the content $i$ at time slot $t$ is just downloaded from the server to the cache with AoI zero. Then, the overall downloading cost is shown in \eqref{delta1}. In \eqref{delta1}, the first term is the downloading cost from server to the cache due to cache updates and the second term is the downloading cost of requests that are delivered using cached contents.

\begin{equation}\label{delta1}
\Delta_{download}=\sum_{t \in \mathcal{T}}\sum_{i \in \mathcal{I}}\Big((c_s-c_b)s_ix_{ti0}+\sum_{a\in\mathcal{A}_{ti}}c_bs_ih_{ti}f_{tia}x_{tia}\Big)
\end{equation}

Next, we calculate the downloading cost related to content recommendation.
Denote by $p_{ija}$ the probability of accepting content $j\in\mathcal{R}_i$ with AoI $a$ instead of content $i$. This probability depends both on the correlation between the two contents as well as the AoI of content $j$. The value of $p_{ija}$ can be calculated based on historical statistics~\cite{2018_sermpezis}, item-item recommendation~\cite{2003_linden}, and/or collaborative filtering techniques~\cite{2009_xio}. 
Denote by $\mathpzc{c}$ a generic candidate set of contents for recommendation. Because of AoI, each element of $\mathpzc{c}$ is a tuple of a recommended content and its AoI. We refer to $\mathpzc{c}$ as the recommendation set. Denote by $\mathcal{C}_{ti}$ the set of all such recommendation sets for content $i\in\mathcal{I}$ in time slot $t$.  Denote by $ v_{ti\mathpzc{c} }$ a binary optimization variable that takes value one if and only if (some content) in recommendation set $\mathpzc{c}\in$   $\mathcal{C}_{ti}$ is accepted instead of content $i$ in time slot $t$. The probability of not accepting any of the contents in $\mathpzc{c}$ is $\tilde{P}_{i\mathpzc{c}}=\prod_{(j,a) \in \mathpzc{c} } (1-p_{ija})$. Thus, the probability of accepting at least one of them is $1-\tilde{P}_{ic}$, and hence the expected cost\footnote{In this cost, $1-\tilde{P}_{i\mathpzc{c}}$ is multiplied by the size of initially requested content i.e., $s_i$. The reason is that we consider recommending only contents with similar size to $s_i$.} is:

\begin{equation}\label{delta2}
\Delta_{recom}=\sum_{t \in \mathcal{T}}\sum_{i \in \mathcal{I}}\sum_{\mathpzc{c}\in \mathcal{C}_{ti}}\left(c_b(1-\tilde{P}_{i\mathpzc{c}})+c_s\tilde{P}_{i\mathpzc{c}}\right)s_ih_{ti}v_{ti\mathpzc{c}}
\end{equation}

Finally, the total cost of system is the sum of $\Delta_{download}$ and $\Delta_{recom}$.

\subsection{Problem Formulation}\label{subsec:problem_formulation}
COPRA can be formulated using integer-linear programming (ILP), as shown in \eqref{ilp}. In \eqref{ilp}, we use  $y_{ti}$ as an auxiliary binary 
variable that equals one if and only if content item~$i$ is cached in time slot~$t$.
Constraints~\eqref{const:AoI1} state that if content $i$ is cached in time slot $t$, then 
it should exactly take one of the possible AoIs $a$ in $\mathcal{A}_{ti}$. Constraints~\eqref{const:AoI2}~and~\eqref{const:AoI3} 
together ensure that content $i$ in time slot $t$ has AoI $a$ (i.e., $x_{ita}=1$) if and only if three
conditions hold: Item $i$ is in the cache ($y_{ti} = 1$), it has AoI $a-1$
in time slot $t-1$ ($x_{i(t-1)(a-1)}=1$), and it is not refreshed
again in slot $t$ ($x_{it0} = 0$). Constraints~\eqref{const_ilp:vy} indicate that either content 
$i$ is cached in time slot $t$ or some set $\mathpzc{c} \in \mathcal{C}_{ti}$ is recommended. Constraints~\eqref{const_ilp:vx}
ensure that the contents in  recommendation  set $\mathpzc{c}$ are indeed cached. Constraints~\eqref{const_ilp:backhaulSize}~and~\eqref{const_ilp:cacheSize}
formulate the cache and backhaul capacities. Finally, Constraints~\eqref{const_ilp:domain_y}-\eqref{const_ilp:domain_v}
state the variable domain.
\begin{figure}[!ht]
\begin{subequations}\label{ilp}
\begin{alignat}{2}
\text{ILP}:&\min\limits_{\bm{y},\bm{x},\bm{v}}\quad
\Delta_{download}+\Delta_{recom} \label{ilpobj}\\
\text{s.t}. \quad
& \sum_{a\in\mathcal{A}_{ti}}x_{tia}=y_{ti},t \in \mathcal{T}, i \in \mathcal{I}\label{const_ilp:AoI1}\\
&x_{tia}\ge y_{ti}+x_{(t-1)i(a-1)}-x_{ti0}-1,\nonumber\\
&~~~~~~~~~~~~~~~~~~~t \in\mathcal{T}\setminus\{1\}, i \in \mathcal{I},a\in\mathcal{A}_{ti}\setminus\{0\}\label{const_ilp:AoI2}\\
&x_{tia}\le x_{(t-1)i(a-1)},\nonumber\\
&~~~~~~~~~~~~~~~~~~~t\in\mathcal{T}\setminus \{1\},i \in\mathcal{I},a\in\mathcal{A}_{ti}\setminus\{0\}\label{const_ilp:AoI3}\\
& \sum_{\mathpzc{c}  \in \mathcal{C}_{ti}} v_{ti\mathpzc{c} }+y_{ti}=1,t\in \mathcal{T},i \in \mathcal{I}\label{const_ilp:vy}\\
& \sum_{\mathpzc{c}  \in \mathcal{C}_{ti}: (j,a) \in \mathpzc{c} } v_{ti\mathpzc{c} }\le x_{tja},t\in \mathcal{T},i \in\mathcal{I},j \in\mathcal{R}_i \label{const_ilp:vx}\\
& \sum_{i\in \mathcal{I}}s_iy_{ti} \leq S,t\in \mathcal{T}\label{const_ilp:cacheSize}\\
& \sum_{i\in \mathcal{I}}s_ix_{ti0} \leq L,t\in \mathcal{T}\label{const_ilp:backhaulSize}\\
& y_{ti}\in\{0,1\},t \in \mathcal{T},  i \in \mathcal{I}\label{const_ilp:domain_y}\\
& x_{tia}\in\{0,1\}, t\in \mathcal{T}, i\in \mathcal{I}, a \in \mathcal{A}_{ti} \label{const_ilp:domain_x}\\
& v_{ti\mathpzc{c} }\in\{0,1\},  t\in \mathcal{T}, i \in \mathcal{I}, \mathpzc{c}  \in \mathcal{C}_{ti}\label{const_ilp:domain_v}
\end{alignat}
\end{subequations}
\end{figure}

As the number of recommendations set are exponentially many, the ILP is exponential in size. However, the ILP is of interest for solving small-scale problem instances for gauging the performance of other suboptimal solutions.

\section{Complexity Analysis}
In this section, we rigorously prove the NP-hardness of COPRA based on a reduction from the 3-SAT.
Next, we show the tractability of the problem for a single time slot when the contents are
partitioned into subcategories with uniform probabilities. 
\begin{theorem}
COPRA is NP-hard.
\end{theorem}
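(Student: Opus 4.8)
The plan is to reduce 3-SAT to a (decision version of the) COPRA problem, working with the uniform-content-size special case to make the hardness as strong as possible. Given a 3-SAT instance with variables $z_1,\dots,z_n$ and clauses $C_1,\dots,C_m$, I would build a COPRA instance in which a single time slot suffices (so the AoI dynamics collapse and the cost functions $f_{tia}$ can be taken constant), and the only remaining decisions are: which content items sit in the cache subject to the capacity bound $S$, and which recommendation set (of size at most $N$) is offered for each non-cached content. The idea is to encode a truth assignment as a choice of cache contents, and to encode clause satisfaction through the recommendation mechanism: a ``clause content'' is not cached, so it must be served either by a recommended related content (cheaply, iff one of its literals is cached-true) or by the remote server (expensively).

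Concretely, I would introduce, for each variable $z_k$, a pair of gadget contents $t_k$ (``true'') and $\bar t_k$ (``false''), and use the cache capacity $S$ together with a sharp choice of sizes/requests to force that for each $k$ exactly one of $\{t_k,\bar t_k\}$ is cached — this is the step that turns a feasible cache configuration into a well-defined truth assignment. For each clause $C_j$ I introduce a content $q_j$ with $\mathcal{R}_{q_j}$ equal to the set of the (at most three) literal-contents appearing in $C_j$; set the acceptance probabilities $p_{q_j,\ell,0}$ to be large (say equal to some fixed $\rho$ close to $1$) when $\ell$ is one of those literal-contents, and make $q_j$ itself too ``expensive'' or capacity-blocked to ever be cached. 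Then the cheapest way to serve requests for $q_j$ is to recommend a cached literal of $C_j$, which is possible iff the assignment satisfies $C_j$. Summing the request costs over all clause contents, the total cost falls below a threshold $B$ that I would compute exactly iff every clause is satisfiable; conversely any schedule meeting the budget yields a satisfying assignment by reading off the cached literals (breaking ties arbitrarily for variables whose both literals happen to be cached, which I will have ruled out by the capacity argument, or allowed harmlessly).

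I would then verify the reduction is polynomial — the number of contents is $O(n+m)$, $|\mathcal{R}_i|\le 3$, $N$ can be set to $1$, and all numeric parameters have polynomial bitlength — and argue correctness in the two directions: (i) a satisfying assignment induces a cache configuration and per-clause recommendation achieving cost $\le B$; (ii) a schedule of cost $\le B$ must, by the capacity-forced structure, correspond to a consistent assignment, and the budget forces every clause content to be served via recommendation, hence every clause to be satisfied. The main obstacle I anticipate is the ``exactly one of $t_k,\bar t_k$'' enforcement together with making clause contents reliably non-cacheable: I need the capacity $S$, the sizes $s_i$, and the request volumes $h_{ti}$ to interact so that no ``cheating'' configuration (e.g. caching a clause content, or caching neither/both literals of a variable) can undercut the budget. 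Getting these inequalities simultaneously tight — so that the COPRA optimum is below $B$ *exactly* when the formula is satisfiable — is the delicate accounting step; everything else is routine gadget bookkeeping.
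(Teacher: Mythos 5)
Your overall architecture matches the paper's proof: a reduction from 3-SAT with a single time slot (so AoI collapses), unit-size contents, a pair of literal contents per variable whose cached member encodes the truth assignment, clause contents that are never cached and are related to their three literals, and a cost threshold that separates satisfiable from unsatisfiable instances. The paper likewise sets $N$ large enough to be irrelevant, uses acceptance probability $1$ on the literal--clause relations, and takes $c_b=1$, $c_s=2$ so that a clause content costs $1$ if some literal of it is cached and $2$ otherwise, giving thresholds $k$ versus $k+1$.

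However, there is a genuine gap, and you have named it yourself: you never actually construct the mechanism that forces each feasible low-cost cache to contain \emph{exactly one} of $\{t_k,\bar t_k\}$ for every $k$ and \emph{no} clause content. With unit sizes and cache capacity $n$, capacity alone does not enforce this --- nothing a priori prevents caching both literals of one variable and neither of another, or spending a cache slot on a clause content, and the per-pair request costs do not by themselves make such configurations strictly worse. The paper closes this hole with an explicit gadget: for each variable it adds an \emph{auxiliary content}, mutually related to both literals of that variable with acceptance probability $1$, carrying $h = n+k+1$ requests. Serving an auxiliary content from the server costs $2(n+k+1)$, which dominates everything else, so any optimal cache must contain, for each variable, at least one member of the triple $\{t_k,\bar t_k,\text{aux}_k\}$; a swap argument then shows one may assume it is a literal (swapping an auxiliary for a literal can only help the clause contents), and a short computation ($\Delta_1 > \Delta_2$ in the paper's notation) rules out sacrificing a pair to cache a clause content. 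Without this auxiliary-content gadget, or an equivalent forcing device with explicit request volumes, direction (ii) of your correctness argument --- that any schedule meeting the budget yields a consistent satisfying assignment --- does not go through, so what you have is a correct plan whose one nontrivial step remains open.
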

\begin{proof}
We adopt a polynomial-time reduction from the 3-SAT problem that is NP-complete\cite{garey1979computers}.
Consider any 3-SAT instance with $k$ clauses and $n$ Boolean variables $u_1,u_2,...,u_n$. A variable
or its negation is called a literal. Denote by $\hat{u}_i$ the negation of $u_i$, $i = 1,2,...,n$.
Each clause consists of a disjunction of exactly three different literals, for example, $\hat{u}_1 \lor u_5 \lor u_7$.
The task is to determine if there is an assignment of true/false values to the variables, such that all clauses are satisfied (i.e., at least one literal has value
true in every clause).

We construct a reduction from 3-SAT as follows. Each literal or clause represents a content, referred to as literal and clause contents, respectively. Moreover, $n$ auxiliary contents are defined, one for each pair of variable and its negation. Hence, there are in total $3n+k$ contents, and  $\mathcal{I}=\{1,2,...,3n+k\}$.
All contents have unit size, i.e., $s_{i}=1$ for $i\in\mathcal{I}$.
Each variable, its negation, and the corresponding auxiliary content are related mutually with acceptance probability of $1$. Thus, the requests made for any of them can be
fully satisfied by any of the other two contents.
The number of time slots is one, i.e., $\mathcal{T}=\{1\}$, and the size of cache is $n$, i.e., $S=n$.
The number of requests for each clause  and literal content is $1$, i.e., $h_{1i}=1$ if $i$ is a literal or a clause content.
Each clause content is related to the corresponding three literal contents with acceptance probability of $1$. Hence for a request made for a clause content, the
system can recommend the three literals if some or all of them are cached.
There are $n+k+1$ requests for
each auxiliary content, i.e., $h_{1i}=n+k+1$ if $i$ is an auxiliary content. 
No relation is present between contents other than those specified above.
Note that the acceptance probability is symmetric between any two related contents.

Parameters $c_b$ and $c_s$ are set to $1$ and $2$, respectively.
We now show there is an optimal solution such that the cache stores exactly either a variable or its negation. 
Suppose an auxiliary and/or a clause content is cached. In the former case, swapping this auxiliary content with a non-cached literal content of the corresponding pair will not increase but possibly improve the cost. Because, by swapping, more clause contents
may also be satisfied from the cache.
Now, suppose a clause content is cached. Then, at least one auxiliary content must be served using the server with cost $2(n+k+1)$.
 For the other contents, the best possible outcome is $(n-1)(n+k+1) + 2n + k+1$. Hence, the total cost is $\Delta_1=2(n+k+1)+(n-1)(n+k+1) + 2n + k+1$.
The cost when exactly one literal of each literal pair is cached is no more than $\Delta_2=3n+n(n+k+1)+2k$, assuming all clause contents are served using the server. 
 It can be verified easily that $\Delta_1>\Delta_2$.
 Therefore, at an optimum, the cache stores exactly either a variable or its negation. Thus the optimal total cost for the literal and auxiliary contents is known. 

Clearly the construction above is polynomial in size.
If there is no solution to the 3-SAT, then at least one clause content need to be downloaded from server with 
cost $c_s=2$, and each of the other clause contents has at least the cost of $c_b=1$. 
Thus, the total cost is at least $\delta_1=k+1$.
If there is a solution to 3-SAT, then the cost for all clause contents is at most $\delta_2=k$. 
As can be seen $\delta_1>\delta_2$. Thus, whether or not there exists a caching
 strategy with a total cost of no more than $\delta_2$ gives the correct answer to 3-SAT.
Therefore, the recognition versions of COPRA is NP-complete and its optimization version is NP-hard.
\end{proof}

In practice, the content items may naturally fall into different subcategories based on the type of the content, e.g., video contents can be catagorized based on if it is science fiction,
drama, or comedy, etc.,~\cite{2020_fu,2021_fu}.
If all items in a subcategory are related with the same acceptance probability, then we show
the optimal solution of the problem with uniform size and one time slot can be computed in polynomial time via DP. Note that
the probability from a subcategory to another may still differ. We refer to this special case as COPRA-CAT.

\begin{theorem}
COPRA-CAT can be solved in polynomial time.
\end{theorem}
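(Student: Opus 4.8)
The plan is to exploit that a single time slot trivializes the AoI dimension and that uniformity inside each subcategory collapses the exponentially many decisions to a bounded family of ``counting'' choices. First I would note that with $\mathcal{T}=\{1\}$ we have $\mathcal{A}_{1i}=\{0\}$ for every $i$, so $x_{1i0}=y_{1i}$ by \eqref{const_ilp:AoI1}, constraints \eqref{const_ilp:AoI2}--\eqref{const_ilp:AoI3} are vacuous, and each $f_{1i0}$ is merely a constant. Since all items have unit size, \eqref{const_ilp:cacheSize} and \eqref{const_ilp:backhaulSize} both reduce to a cardinality bound, so the effective cache capacity is $K=\min(S,L)$ items. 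Thus COPRA-CAT reduces to: choose a set $\mathcal{Y}$ of at most $K$ items to cache, and for each $i\notin\mathcal{Y}$ a recommendation set $\mathpzc{c}_i$ of cached related items with $|\mathpzc{c}_i|\le N$, minimizing $\sum_{i\in\mathcal{Y}}\big((c_s-c_b)+c_bh_{1i}f_{1i0}\big)+\sum_{i\notin\mathcal{Y}}\big(c_b+(c_s-c_b)\tilde P_{i\mathpzc{c}_i}\big)h_{1i}$, where $\tilde P_{i\mathpzc{c}_i}=\prod_{(j,0)\in\mathpzc{c}_i}(1-p_{ij0})$.

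Second, I would use the subcategory structure. Let the subcategories be $G_1,\dots,G_m$ with $|G_q|=n_q$, where within $G_q$ the request count, cost, and all acceptance probabilities agree (the probability from $G_q$ to $G_r$ being some $\pi_{qr}$, possibly $0$). Any two items in the same subcategory are interchangeable in every coefficient of the problem, so a solution is, up to relabeling, determined by the vector $\vec{k}=(k_1,\dots,k_m)$ giving how many items of each subcategory are cached, subject to $0\le k_q\le n_q$ and $\sum_q k_q\le K$. Given $\vec{k}$, the recommendation subproblem for a non-cached item of $G_q$ is to pick at most $N$ cached items minimizing $\prod(1-\cdot)$; since every candidate in subcategory $r$ contributes the identical factor $1-\pi_{qr}$ and adding a candidate never increases the product, the optimum is the obvious greedy one: sort the subcategories by $\pi_{q\cdot}$ in decreasing order and fill the $N$ slots from the top, taking up to $k_r$ items from $G_r$. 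This yields a closed-form $\tilde P_q(\vec{k})$ and hence a closed-form per-item recommendation cost, so the total cost $F(\vec{k})$ is an explicit function evaluable in $O(mN)$ time.

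Third, I would minimize $F$ over the feasible lattice by a dynamic program over the subcategories: process $q=1,\dots,m$, and for each prefix keep the best cost attained for every reachable partial count-vector; since the number of subcategories is a fixed constant and each $k_q$ ranges over $O(I)$ values, the table has $O(I^{m})$ entries, each computed in polynomial time, so the whole procedure is polynomial and returns the optimum. (Equivalently, one simply enumerates all admissible $\vec{k}$.)

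I expect the main obstacle to be the coupling introduced by recommendations: a non-cached item of $G_q$ may be served by an item cached in a \emph{different} subcategory, so the recommendation cost is not separable across subcategories and the DP cannot be driven by a running cache-budget alone; carrying the full count-vector (affordable precisely because $m$ is a constant) is what makes the argument go through. The two supporting facts that need care are (i) the symmetry reduction, i.e.\ justifying that only the counts $\vec{k}$ and not the identities of cached items matter, which uses that $\mathcal{R}_i$ has the same subcategory profile for all $i\in G_q$; and (ii) the greedy optimality of the recommendation set, including the monotonicity observation that, because $c_s>c_b$, the per-request cost $c_b+(c_s-c_b)\tilde P_{i\mathpzc{c}_i}$ is increasing in $\tilde P_{i\mathpzc{c}_i}$, so one indeed wants $\tilde P$ as small as possible.
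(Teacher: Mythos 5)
There is a genuine gap in your argument: your claim of polynomial running time rests on the assumption, introduced only in your third step, that the number of subcategories $m$ is ``a fixed constant.'' Nothing in the theorem or in the problem setup grants this, and your table of count-vectors has $O(I^{m})$ entries, which is exponential in $m$. So as written you prove a weaker statement (polynomial time for a bounded number of subcategories), not the theorem. The paper's proof avoids this entirely: it defines $g(k,r)$ as the cost incurred by category $k$ when $r$ of its items are cached (taking the $r$ most-requested items, since request counts need not be uniform within a category --- another point where your model is slightly stronger than the paper's), and then runs a knapsack-style recursion $w(k,s')=\min_{r}\{g(k,r)+w(k-1,s'-r)\}$ indexed only by the running cache budget. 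That gives $O(KS^{2})$ regardless of how many categories there are.

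The obstacle you correctly identify --- that cross-category recommendations make the per-category cost non-separable, so a budget-only DP cannot work --- is precisely what forces you into the exponential count-vector state. But this observation cuts both ways: it is also the implicit assumption underlying the paper's proof, since $g(k,r)$ is only well defined when the cost of category $k$ depends on nothing but $r$, i.e., when recommendations are effectively confined within a subcategory (or the cross-category contribution is otherwise separable). Under that reading of the model your whole count-vector machinery is unnecessary and the simple budget DP suffices; under your more general reading (arbitrary cross-category probabilities $\pi_{qr}$), the budget DP fails and no polynomial algorithm follows from your construction unless $m=O(1)$. To repair your proof you should either justify the separability assumption and collapse your state to the cache budget, or explicitly restrict the theorem to constantly many subcategories. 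Your two supporting lemmas --- the exchange/symmetry reduction and the greedy optimality of filling all $N$ recommendation slots with the largest acceptance probabilities (valid because $c_b+(c_s-c_b)\tilde P$ is increasing in $\tilde P$ and each extra factor $1-p\le 1$ can only shrink the product) --- are both sound and are consistent with what the paper uses implicitly.
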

\begin{proof}
We compute a matrix, called cost matrix and
denote it by $\bm{g}$, in which entry $g(k, i)$ represents the 
total cost by caching $i$ content items of category $k$.
This value is computed  simply from the first $i$ contents with the highest requests.
Below, a recursive function is introduced to derive the
optimal caching solution over all categories. We define a second
matrix, called the optimal cost matrix, and denote it by $\bm{w}$,
in which $w(k,s')$ represents the cost of the optimal solution
by considering the first $k$ categories using a cache size of $s'$, 
$s' = 0, 1,...,S$. The value of $w(q, s')$ is computed by the
following recursion:
\begin{equation}\label{uniform_1slot}
\begin{aligned}
 w(k, s')= \underset{r=0,1,...,s'}{\min}\{g(k,r)+w(k-1,s'-r)\}
\end{aligned}
\end{equation}
Using Equation~\eqref{uniform_1slot}, the optimal
cost for the first $k$ categories is computed given the optimal cost of the first
$k-1$ categories. For the overall solution, the optimal cost can be computed
using the above recursion for cache size $S$ and $K$ categories.
We prove it by induction. First, when $k= 1$, i.e., we have only one category,
We have $w(1,s') = \min_rg(1,r)$ for all $r=0,1,...,s'$. Obviously $r^*=s'$, that is, to allocate the whole 
capacity to this category. Now, assume $w(l,s')$ is optimal for some $l$. We prove that $w(l+1,s')$ is
optimal. According to the recursive function:

\begin{equation}
\begin{aligned}
 w(l+1, s')= \underset{r}{\min}\{g(l+1,r)+w(l,s'-r)\}
\end{aligned}
\end{equation}
The possible values for $r=0,1,...,s'$, and
for each of the possible values of $r$, $w(l, s'-r)$ is optimal.
This together gives the conclusion that the minimum will be
obtained indeed by the $\min$ operation. Thus, $w(l+1, s')$ is
optimal.

Finally, we show that $w(K, S)$ can be computed in polynomial time.
The complexity of computing $g$ is of $O(KI)$.  By the above, the computational
complexity of $\bm{w}$ is of $O(KS^2)$ where $S$ is up to the number of contents.
\end{proof}

\section{Greedy Algorithm}
A commonly considered strategy for fast but suboptimal solution is a greedy approach (GA) that builds up a solution incrementally.
GA tries to minimize the total cost of each time slot by considering the items one by one.
The algorithm is shown in Algorithm~$\ref{alg:GA}$.

For each time slot and each item, GA calculates an overall score based on the number of requests, the relations to other contents, and the size of the content, see
Line~$\ref{alg_ga:score}$.
Then, GA treats items based on their scores in descending order. For a content under processing, it is downloaded from server  to the cache if there is enough cache and backhaul capacities,
see Lines~$\ref{alg_ga:update1}$-$\ref{alg_ga:update2}$.
Otherwise, GA checks if the content is cached in the previous time slot, and if there is enough cache capacity to store the content, see Lines~$\ref{alg_ga:cache1}$-$\ref{alg_ga:cache2}$.
When all contents are processed, GA finds recommendation sets for the non-cached items. For each non-cached item, GA looks at the cached and related items, and pick the ones of highest acceptance probabilities, see Lines~$\ref{alg_ga:recom1}$-$\ref{alg_ga:recom2}$.
GA is simple but it turns out the performance is not satisfactory, and therefore there is a need of developing a better algorithm.

\begin{algorithm}[ht!]
{\bf Input}: $\mathcal{T}$, $S$, $L$, $\bm{p}$, $\bm{h}$, $A_i$, $s_i, i \in \mathcal{I}$\\
{\bf Output}: $\bm{y}^*$, $\bm{x}^*$, $\bm{v}^*$\\
\caption{Greedy Algorithm}\label{alg:GA}
\begin{algorithmic}[1]
\STATE $\bm{y}\leftarrow\{0,t\in\mathcal{T}\cup\{0\},i\in\mathcal{I}\}$
\STATE $\bm{x}\leftarrow\{0,t\in\mathcal{T},i\in\mathcal{I}, a\in\mathcal{A}_{ti}\}$
\STATE $\mathcal{C}'_{ti}\leftarrow \emptyset, t\in\mathcal{T}, i\in\mathcal{I}$ 
\STATE $\bm{v}\leftarrow\{0,t\in\mathcal{T},i\in\mathcal{I}, \mathpzc{c} \in\mathcal{C}'_{ti}\}$
\FOR{$t\in\mathcal{T}$}
\STATE $\mathcal{I}'\leftarrow \mathcal{I}$, $L'\leftarrow L$, $S'\leftarrow S$
\STATE $\theta_{ti}\leftarrow \nicefrac{(\sum_{j\in\mathcal{I}}p_{ji0}h_{tj})}{s_i}$,~{$i\in\mathcal{I}$}\label{alg_ga:score}
\WHILE{$\mathcal{I}'\neq \emptyset$~and~$S'>0$}
\STATE $i^*\leftarrow \arg\max_{i\in \mathcal{I}'}\theta_{ti}$
\STATE $\mathcal{I}'\leftarrow \mathcal{I}'\setminus\{i^*\}$
\IF {($s_{i^*}\le L'$ and $s_{i^*}\le S'$)}\label{alg_ga:update1}
\STATE $L'\leftarrow L'-s_{i^*}$, $S'\leftarrow S'-s_{i^*}$
\STATE $y_{ti^*}\leftarrow 1$, $x_{ti^*0}\leftarrow 1$\label{alg_ga:update2}
\ELSIF{($y_{(t-1)i^*}=1$ and $s_{i^*}\le S'$ and $x_{ti^*A_i^*}\neq1$)}\label{alg_ga:cache1}
\STATE $S'\leftarrow S'-s_{i^*}$
\STATE $a^*\leftarrow \arg\max_{a}x_{(t-1)i^*a}$
\STATE $y_{ti^*}\leftarrow1$, $x_{ti^*(a^*+1)}\leftarrow1$\label{alg_ga:cache2}
\ENDIF
\ENDWHILE
\FOR{$i\in\mathcal{I}'$}\label{alg_ga:recom1}
\STATE $\mathpzc{c} \leftarrow\{(j,a):x_{tja}=1,  j\in \mathcal{R}_i, a\in\mathcal{A}_{tj}\}$
\STATE $\mathcal{C}'_{ti} \leftarrow\{$the first $N$ elements in $\mathpzc{c}$ with the highest probabilities \\ with respect to $i\}$
\STATE $ v_{ti\mathpzc{c} }\leftarrow 1$
\ENDFOR
\ENDFOR \label{alg_ga:recom2}
\end{algorithmic}
\end{algorithm}

\section{Algorithm Design}
We propose an algorithm by applying Lagrangian decomposition (LD) to ILP~\eqref{ilp}.
In LD, some variables are duplicated, with equalities constraints requiring that the duplicates are equal to the original variables. 
Next, these constraints are relaxed  using Lagrangian relaxation and some method (often a subgradient method~\cite{1977_goffin,1981_bazara}) is applied to solve resulting Lagrangian dual. 


\subsection{Lagrangian Decomposition}
In our LD-based algorithm (LDA), we duplicate the $\bm{x}$ variables. Specifically, we replace $\bm{x}$ variables in AoI constraints~$\eqref{const_ilp:AoI1}$-$\eqref{const_ilp:AoI3}$ by $\bm{x}'$ and add a set of constraints  requiring $\bm{x}=\bm{x}'$. Next, we relax constraints $\bm{x}=\bm{x}'$ with multipliers $\bm{\lambda}$, and the resulting Lagrangian relaxation is given~in~\eqref{ld}. Note that $\Delta'_{download}$ is the same as $\Delta_{download}$ but the $x$ variables are replaced by $x'$.

\begin{figure}[!ht]
\begin{subequations}\label{ld}
\begin{alignat}{2}
&\min\limits_{\bm{y},\bm{x},\bm{x'},\bm{v}\in\{0,1\}}\Delta'_{download}+\Delta_{recom}+\nonumber \\
&~~~~~~~~~~~~~~~~~~~~~~~~~~~~\sum_{t\in\mathcal{T}}\sum_{i \in\mathcal{I}}\sum_{a\in\mathcal{A}_{ti}}\lambda_{tia}(x'_{tia}-x_{tia}) \label{ld_obj}\\
&\text{s.t}. \quad
\sum_{a\in\mathcal{A}_{ti}}^{t-1}x'_{tia}=y_{ti},t \in \mathcal{T}, i \in \mathcal{I}\label{const_ld:AoI1}\\
&x'_{tia}\ge y_{ti}+x'_{(t-1)i(a-1)}-x'_{ti0}-1,\nonumber\\
&~~~~~~~~~~~~~~~~~~~~~~~~t \in\mathcal{T}\setminus\{1\}, i \in \mathcal{I},a\in\mathcal{A}_{ti}\setminus\{0\}\label{const_ld:AoI2}\\
&x'_{tia}\le x'_{(t-1)i(a-1)},t \in\mathcal{T}\setminus \{1\},i \in\mathcal{I},a\in\mathcal{A}_{ti}\setminus\{0\}\label{const_ld:AoI3}\\
& \eqref{const_ilp:vy}-\eqref{const_ilp:backhaulSize}\nonumber
\end{alignat}
\end{subequations}
\end{figure}

As can be seen ILP~\eqref{ld} is decomposed into to two subproblems, one consists of all terms containing $\bm{x}'$, and the other all terms with $\bm{x}$. 
Below, we formally state each of them.
 
 \subsection{Subproblem~One}
 Subproblem~$1$, hereinafter referred to as SP$_1$, is shown in~\eqref{sp1}. The SP$_1$ consists of all terms having $\bm{x}'$, namely  the downloading cost and Lagrangian multiplier terms as the objective function, and constraints related to AoI. 
 
 \begin{figure*}
\centering
\includegraphics[scale=0.55]{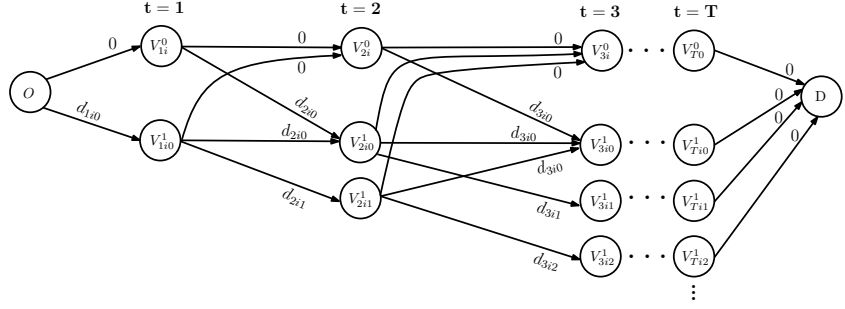}
\begin{center}
\vspace{-3mm}
\caption{Graph of the shortest path for SP$_{1}^{(i)}$ corresponding to content item $i$.}
\label{fig:sp1t}
\end{center}
\end{figure*}

 \begin{figure}[!ht]
\begin{subequations}\label{sp1}
\begin{alignat}{2}
\text{SP}_1:&\min\limits_{\bm{y},\bm{x'}\in\{0,1\}}
\Delta'_{download}+
\sum_{t\in\mathcal{T}}\sum_{i \in\mathcal{I}}\sum_{a\in\mathcal{A}_{ti}}\lambda_{tia}x'_{tia} \label{obj:sp1}\\
&\text{s.t}. \quad
\sum_{a\in\mathcal{A}_{ti}}x'_{tia}=y_{ti},t \in \mathcal{T}, i \in \mathcal{I}\label{const:AoI1}\\
&x'_{tia}\ge y_{ti}+x'_{(t-1)i(a-1)}-x'_{ti0}-1,\nonumber\\
&~~~~~~~~~~~~~~~~t \in\mathcal{T}\setminus\{1\}, i \in \mathcal{I},a\in\mathcal{A}_{ti}\setminus\{0\}\label{const:AoI2}\\
&x'_{tia}\le x'_{(t-1)i(a-1)},\nonumber\\
&~~~~~~~~~~~~~~~t\in\mathcal{T}\setminus \{1\},i \in\mathcal{I},a\in\mathcal{A}_{ti}\setminus\{0\}\label{const:AoI3}
\end{alignat}
\end{subequations}
\end{figure} 
We exploit the structure of SP$_1$ as follows.
 First, as there is  no constraint bundling the content items together, SP$_1$ decomposes by content, leading to $I$ smaller problems.
 The optimization problem corresponding to content $i\in\mathcal{I}$ is denoted by SP$_1^{(i)}$  and consists of the terms of SP$_1$ for content $i$.
 Second, we show that SP$_1^{(i)}$, $i \in\mathcal{I}$, can be solved as a shortest path problem.

 \begin{theorem}
 SP$_1^{(i)}$, $i \in\mathcal{I}$, can be solved in polynomial time as a shortest path problem.
\end{theorem}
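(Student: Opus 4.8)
The plan is to show that every feasible solution of SP$_1^{(i)}$ corresponds to a directed path in an acyclic graph whose length equals the objective value, so that an optimal schedule for content $i$ is a shortest path. First I would observe that, for a single content, constraints \eqref{const:AoI1}--\eqref{const:AoI3} make the AoI evolve deterministically once the ``refresh slots'' are fixed: if content $i$ is cached at slot $t-1$ with AoI $a-1$ and is not re-downloaded at slot $t$ (i.e.\ $x'_{ti0}=0$) while still being cached ($y_{ti}=1$), then \eqref{const:AoI2} forces $x'_{tia}=1$, and \eqref{const:AoI3} forbids any other positive AoI; whenever $x'_{ti0}=1$ the AoI resets to $0$, and whenever $y_{ti}=0$ the content is simply absent. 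Hence a feasible solution is nothing but, for each slot, a choice among ``absent'', ``present and just downloaded'' (AoI $0$), or ``present with AoI $a$ inherited from the previous slot'' for $a\in\mathcal{A}_{ti}\setminus\{0\}$, with consecutive choices linked by the transition rules just described.

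Next I would build the graph of Fig.~\ref{fig:sp1t}: introduce a source node $\sigma$, a sink node $\tau$, and for every slot $t\in\mathcal{T}$ a node $(t,a)$ for each $a\in\mathcal{A}_{ti}$ together with a node $(t,\varnothing)$ representing ``not cached''. Arcs go only from slot-$(t-1)$ nodes to slot-$t$ nodes (and from $\sigma$ to the slot-$1$ nodes, and from all slot-$T$ nodes to $\tau$), following the admissible transitions: from $(t-1,\varnothing)$ to $(t,\varnothing)$ and to $(t,0)$; and from $(t-1,a)$ to $(t,\varnothing)$, to $(t,0)$, and to $(t,a+1)$ provided $a+1\in\mathcal{A}_{ti}$. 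I would attach the per-slot cost to the head node (equivalently, to each incoming arc): cost $0$ for $(t,\varnothing)$, cost $(c_s-c_b)s_i+c_bs_ih_{ti}f_{ti0}+\lambda_{ti0}$ for $(t,0)$, and cost $c_bs_ih_{ti}f_{tia}+\lambda_{tia}$ for $(t,a)$ with $a>0$, while the arcs incident to $\sigma$ and $\tau$ carry no extra cost. Summing these weights along a $\sigma$--$\tau$ path reproduces exactly $\Delta'_{download}$ plus $\sum_{t}\sum_{a}\lambda_{tia}x'_{tia}$ restricted to content $i$, i.e.\ the objective of SP$_1^{(i)}$, which gives a length-preserving bijection between feasible solutions and $\sigma$--$\tau$ paths.

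Finally, since the graph is layered by time slot it is a DAG with $O(T\min(A_i,T))$ nodes and $O(T\min(A_i,T))$ arcs, so a shortest path can be computed in time linear in its size by dynamic programming in topological (time) order; the Lagrangian multipliers may be negative, but a DAG has no cycles so this causes no difficulty, and summing over $i\in\mathcal{I}$ keeps the total effort polynomial. I expect the main obstacle to be the careful verification that the arc set is neither too permissive nor too restrictive, that is, that the admissible transitions listed above capture precisely the solutions allowed by \eqref{const:AoI1}--\eqref{const:AoI3}: in particular that an AoI value $a>0$ at slot $t$ can only be inherited from AoI $a-1$ at slot $t-1$ (from \eqref{const:AoI3}) and is mandatory when the content stays cached without refresh (from \eqref{const:AoI2}), and that the boundary situations (the slot $t=1$, and reaching AoI $A_i$ so that the content must be dropped or refreshed) are handled automatically by the definition of the node set $\mathcal{A}_{ti}$.
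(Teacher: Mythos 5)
Your construction is essentially identical to the paper's proof: the same layered DAG with one ``not cached'' node and one node per admissible AoI value in each time slot, the same admissible transitions, and the same arc weights $(c_s-c_b)s_i+c_bs_ih_{ti}f_{ti0}+\lambda_{ti0}$ for a refresh and $c_bs_ih_{ti}f_{tia}+\lambda_{tia}$ for aging, yielding the same length-preserving correspondence between feasible schedules and origin--destination paths. Your added observations (that the AoI constraints force deterministic evolution between refreshes, and that negative multipliers are harmless in a DAG) are correct and only make the argument slightly more explicit than the paper's.
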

\begin{proof}
Consider content $i\in \mathcal{I}$. We construct an acyclic
directed graph such that finding the shortest path from the origin
to distention is equivalent to solving SP$_1^{(i)}$. The graph is shown in Figure~\ref{fig:sp1t}. The objective function of SP$_1^{(i)}$ is:

\begin{equation}\label{obj_sp1i_rewrite}
\begin{aligned}
&\sum_{t \in \mathcal{T}}\Big((c_s-c_b)s_ix'_{ti0}+\sum_{a\in\mathcal{A}_{ti}}c_bs_ih_{ti}f_{tia}x'_{tia}\Big)+\sum_{t\in\mathcal{T}}\sum_{a\in\mathcal{A}_{ti}}\lambda_{tia}x'_{tia}\\
&=\sum_{t \in \mathcal{T}}(c_s-c_b)s_ix'_{ti0}+\sum_{t \in \mathcal{T}}\sum_{a\in\mathcal{A}_{ti}}(c_bs_ih_{ti}f_{tia}+\lambda_{tia})x'_{tia}
\end{aligned}
\end{equation}

The graph is constructed as follows. Nodes $O$ and $D$ are
used to represent the origin and destination respectively. For time slot $t$, 
there are $1+\min\{A_i,t-1\}$ vertically aligned nodes. A path passing node $V_{ti}^0$ and $V_{tia}^1$ corresponds to 
the following two scenarios, respectively: 

\begin{enumerate}
    \item The content is not in the cache.
\item The content is in the cache and has AoI $a$, $a\in\mathcal{A}_{ti}$.

\end{enumerate}

 For each node $V_{ti}^0$ there are two
outgoing arcs, one to $V_{(t+1)i}^0$ which corresponds to that the content is
not stored in the next time slot and the arc hence has weight zero, and the other
to $V_{(t+1)i0}^1$ which has weight $d_{(t+1)i0}=(c_s-c_b)s_i+c_bs_ih_{(t+1)i}f_{(t+1)i0}+\lambda_{(t+1)i0}$ corresponding to the case that the content
is downloaded to the cache in the next time slot and has AoI
zero. For each node $V_{tia}^1$ there three outgoing arcs to $V_{(t+1)i}^0$,
$V_{(t+1)i0}^1$, and $V_{(t+1)i(a+1)}^0$, respectively. 
A path passing through the first, second, and the third arcs corresponds to the following three scenarios, respectively:

\begin{enumerate}
    \item Content is deleted for the next
time slot with arc weight zero.
\item The content is re-downloaded
from the cache and has AoI zero with weight $d_{(t+1)i0}$.
\item The content is kept and its AoI increases
with one time unit and has weight $d_{(t+1)i(a+1)}=c_bs_ih_{(t+1)i}f_{(t+1)i(a+1)}+\lambda_{(t+1)i(a+1)}$.
\end{enumerate}
Finally, there are $T$ arcs from $V_{Ti0}^0$ and $V_{Tia}^1$ to $D$,
each with weight zero.

Given any solution of SP$_{i}^{(i)}$, by construction of the graph,
the solution directly maps to a path from the origin to the
destination with the same objective function. Conversely, given
a path we construct an ILP solution. For time slot $t$, if the path contain
node $V_{ti}^0$ and $V_{tia}^1$, we set $y_{ti}=0$. If the path passes through node $V_{tia}^1$, we
set $y_{ti} = 1$ and $x'_{tia}=1$. The resulting ILP solution has the
same objective function value as the path length in
terms of the arc weights. Hence the conclusion.
\end{proof}


\subsection{Subproblem~Two}
Subproblem~$2$, hereinafter referred to as SP$_2$, consists of all those terms of \eqref{ld} containing $\bm{x}$. 
SP$_2$ decomposes by time slot,
leading to $T$ smaller problems. Denote by SP$_2^{(t)}$ the problem corresponding to
time slot $t$, shown in \eqref{sp2t}. 

The number of $v$ variables in SP$_2^{(t)}$ is exponentially many, as there are exponential number of recommendation sets. Hence, having
all $v$ variables in the ILP is impractical. To deal with this issue, we apply column generation to the $\bm{v}$ variables in the LP relaxation of~\eqref{sp2t}, to generate only the promising recommendation sets. Column generation is a powerful method to
obtain the global optimum of some structured linear programs with exponential number of variables. In a column generation algorithm, the most promising variables are generated in a iterative process
by solving alternatively a  master problem (MP) and a pricing problem (PP). Each time PP is solved, a new variable that possibly
improves the objective function is generated. The benefit of column generation is to exploit the fact that at optimum only a few variables are positive.
Below we  define the MP and PP for solving SP$_{2}^{(t)}$. In the following,  to ease the presentation, we consider a generic time
slot and drop the index $t$.

\begin{figure}[h]
\begin{subequations}\label{sp2t}
\begin{alignat}{2}
\text{SP}_2^{(t)}:&\min\sum_{i \in \mathcal{I}}\sum_{\mathpzc{c} \in \mathcal{C}_{i}}\left(c_b(1-\tilde{P_{ic}})+c_s\tilde{P_{ic}}\right)s_ih_{ti} v_{ti\mathpzc{c} }\\
&~~~~~~~~~~~~~~~~~~~~~~~~~~~~~~~~~~~~~~~~-\sum_{i \in\mathcal{I}}\sum_{a\in\mathcal{A}_{ti}}\lambda_{tia}x_{tia}\nonumber \\
\label{obj:sp2t}\\
\text{s.t}. \quad
& \sum_{a\in\mathcal{A}_{ti}}x_{tia}=y_{ti},i \in \mathcal{I}\label{const_sp2t:AoI1}\\
& \sum_{\mathpzc{c} \in \mathcal{C}_i} v_{ti\mathpzc{c} }+y_{ti}=1,i \in \mathcal{I}\label{const_sp2t:vy}\\
& \sum_{\mathpzc{c} \in \mathcal{C}_i: (j,a) \in \mathpzc{c} } v_{ti\mathpzc{c} }\le x_{tja},,i \in\mathcal{I},j \in\mathcal{R}_i \label{const_sp2t:vx}\\
& \sum_{i\in \mathcal{I}}s_iy_{ti} \leq C\label{const_sp2t:cacheSize}\\
& \sum_{i\in \mathcal{I}}s_ix_{ti0} \leq L\label{const_sp2t:backhaulSize}\\
& y_{ti}\in\{0,1\},t \in \mathcal{T},  i \in \mathcal{I}\label{const_sp2t:domain_y}\\
& x_{tia}\in\{0,1\}, t\in \mathcal{T}, i\in \mathcal{I}, a \in \mathcal{A}_{ti} \label{const_sp2t:domain_x}\\
&  v_{ti\mathpzc{c} }\in\{0,1\}, t\in\mathcal{T},i \in \mathcal{I}, \mathpzc{c} \in\mathcal{C}_{ti}\label{const_sp2t:domain_v}
\end{alignat}
\end{subequations}
\end{figure}

\subsubsection{MP and RMP}
MP is the continuous version~of~\eqref{sp2t}. Restricted MP~(RMP) is the~MP but with a small
subset~$\mathcal{C}'_i \subseteq \mathcal{C}_i$ for any content  $i\in \mathcal{I}$. Denote by~$C'_i$ the cardinality of~$\mathcal{C}'_i$.

\subsubsection{Pricing problem}
The PP uses the dual information to generate new variables/columns. Denote by $\bm{v}^*$ the optimal solution of RMP. 
After obtaining $\bm{v}^*$, we need to check whether $\bm{v}^*$ is the
global optimum of RMP. This can be determined by finding
a column with the minimum reduced cost for each content
$i\in\mathcal{I}$. This means the PP decomposes to $I$ smaller problems, one corresponding to each content $i$.
 If all these minimum reduced cost values are nonnegative, the current solution is optimal. 
Otherwise, we add the columns with negative
reduced costs to their recommendation sets.

Consider content $i\in\mathcal{I}$. Denote by $\pi^*_i$ and $\bm{\beta}^*_i=\{\beta_{ija}|j\in \mathcal{R}_i,a\in\mathcal{A}_i\}$ the optimal dual values of the counterpart constraints of \eqref{const_sp2t:vy} and \eqref{const_sp2t:vx} in the RMP, respectively. 
Hence, the reduced cost of the $v$-variable of content $i$ and recommendation set $\mathpzc{c}$ is:
\begin{equation}\label{1objnonlin_ppi}
\begin{aligned}
&\left(c_b(1-\tilde{P_{i\mathpzc{c} }})+c_s\tilde{P_{i\mathpzc{c} }}\right)s_ih_{i}-\pi^*_{i}+\sum_{j\in\mathcal{R}_i}\sum_{a\in\mathcal{A}_{j}}\beta^*_{ija}=\\
&\left(s_ih_i(c_s-c_b)\tilde{P_{i\mathpzc{c} }}\right)+c_bs_ih_i-\pi^*_{i}+\sum_{j\in\mathcal{R}_i}\sum_{a\in\mathcal{A}_j}\beta^*_{ija}
\end{aligned}
\end{equation}
in which $\tilde{P}_{i\mathpzc{c}}=\prod_{(j,a) \in \mathpzc{c} } (1-p_{ija})$. 
This reduced cost is nonlinear due to the term  $\tilde{P}_{i\mathpzc{c}}$. But, we can linearize it using logarithm. Let 

\begin{equation}\label{2objnonlin_ppi}
\begin{aligned}
p'&=\log\left(h_is_i(c_s-c_b)\prod_{(j,a) \in \mathpzc{c} } (1-p_{ija})\right)\\
&=\log\left(h_is_i(c_s-c_b)\right) +\sum_{(j,a) \in \mathpzc{c} }\log(1-p_{ija})
\end{aligned}
\end{equation}

Now, the reduced cost can expressed as:
\begin{equation}\label{3objnonlin_ppi}
10^{p'}+c_bs_ih_i-\pi^*_{i}+\sum_{j\in\mathcal{R}_i}\sum_{a\in\mathcal{A}_j}\beta^*_{ija}
\end{equation}
where $p'=\log\left(h_is_i(c_s-c_b)\right) +\sum_{(j,a) \in \mathpzc{c} }\log(1-p_{ija})$. As $p_{ija}\in (0,1)$, $\sum_{(j,a) \in \mathpzc{c} }\log(1-p_{ija})$ is zero or a negative value. Thus, the minimum and maximum values that $p'$ can take are $p'_{min}=\log\left(h_is_i(c_s-c_b)\right) +\sum_{(j,a) \in \mathpzc{c} }\log(1-p_{ija})$. 
 and $p'_{max}=\log\left(h_i(c_s-c_b)\right)$, respectively. Hence $p'\in[p_{min},p_{max}]$.
The above expression is for a given $v$-variable. In the following, we define PP, that is an auxiliary problem, of which the optimum will tell us the not-yet-present variable with minimum reduced cost.

Denote by PP$^{(i)}$ the PP corresponding to content $i$. Let $z_{ja}$ be a binary optimization variable that takes value
one if and only if content $j$ with AoI $a$ is in the set to be generated. Then PP$^{(i)}$ can be expressed~as~\eqref{ppi}. Note that the terms $c_bs_ih_i$ and $\pi^*_{i}$ are constants here, 
and hence can be dropped in the optimization process.
Constraints~\eqref{const_ppi:AoI} ensure that for each content in the recommendation set, exactly one AoI value is selected.
Constraint \eqref{const_ppi:rmax} states that the total number of contents in the recommendation set can not exceed the given upper bound. In the following, we show that PP$^{(i)}$ can be solved via DP.
\begin{figure}[H]
\begin{subequations}\label{ppi}
\begin{alignat}{2}
\text{PP}^{(i)}:&\min 10^{p'}+\sum_{j\in\mathcal{R}_i}\sum_{a\in\mathcal{A}_j}\beta^*_{ija}z_{ja}\label{ppi:obj}\\
\text{s.t}. \quad
&p'=\log\left(h_is_i(c_s-c_b)\right) +\sum_{j\in \mathcal{R}_i}\sum_{a\in \mathcal{A}_j}\log(1-p_{ija})z_{ja}\label{const_ppi:eq}\\
& \sum_{a\in\mathcal{A}_j}z_{ja}\le1,j\in\mathcal{R}_i\label{const_ppi:AoI}\\
& \sum_{j \in \mathcal{R}_i}\sum_{a\in \mathcal{A}_j}z_{ja}\le N\label{const_ppi:rmax}\\
&z_{ja}\in\{0,1\},j \in \mathcal{R}_i,a\in\mathcal{A}_{j}\\
&p'\in[p_{min},p_{max}]
\end{alignat}
\end{subequations}
\end{figure}
\vspace{-7mm}
\begin{theorem}
PP$^{(i)}$ can be solved to any desired accuracy via DP.
\end{theorem}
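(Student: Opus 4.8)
The plan is to recognize PP$^{(i)}$ as a cardinality-constrained selection problem in which the only nonlinearity is the scalar term $10^{p'}$, and to handle that term by discretizing the (bounded) range of $p'$. A feasible solution picks, for each content $j\in\mathcal{R}_i$, at most one pair $(j,a)$ with $a\in\mathcal{A}_j$, using at most $N$ pairs in total; writing $\kappa=\log(h_is_i(c_s-c_b))$ and $\ell_{ja}=\log(1-p_{ija})\le 0$, the objective \eqref{ppi:obj}--\eqref{const_ppi:eq} equals $10^{\kappa+\sum_{(j,a)\text{ selected}}\ell_{ja}}+\sum_{(j,a)\text{ selected}}\beta^*_{ija}$ (the constants $c_bs_ih_i-\pi^*_i$ being dropped). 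The key observation is that once $p'=\kappa+\sum\ell_{ja}$ is fixed, the residual problem is to minimize the linear term $\sum\beta^*_{ija}z_{ja}$ subject only to the constraints \eqref{const_ppi:AoI}--\eqref{const_ppi:rmax}, which is immediate; hence it suffices to enumerate, approximately, the attainable values of $p'$ while tracking the best linear cost achieving each.

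First I would partition $[p_{min},p_{max}]$ into $M$ buckets of width $\delta=(p_{max}-p_{min})/M$ and replace each $\ell_{ja}$ by $\bar\ell_{ja}$, its value rounded to the nearest multiple of $\delta$. Then I would build a DP table $W(j,n,m)$ equal to the minimum of $\sum_{j'\le j}\sum_a\beta^*_{ij'a}z_{j'a}$ over admissible selections restricted to the first $j$ contents of $\mathcal{R}_i$ that use exactly $n$ contents and whose rounded log-sum $\sum_{j'\le j}\sum_a\bar\ell_{j'a}z_{j'a}$ lands in bucket $m$. The recursion processes contents one at a time: from $W(j,n,m)$ one either skips content $j+1$, keeping $(n,m)$, or selects some $a\in\mathcal{A}_{j+1}$, moving to $W(j+1,n+1,m+m_{j+1,a})$ with increment $\beta^*_{i(j+1)a}$, where $m_{j+1,a}=\bar\ell_{(j+1)a}/\delta$. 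After the table is filled, the algorithm returns $\min_{n\le N,\,m}\{10^{\kappa+m\delta}+W(R_i,n,m)\}$ together with the minimizing selection; a negative value (after restoring the dropped constants) yields a column with negative reduced cost, and nonnegativity for every $i$ certifies optimality of the current RMP.

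For the accuracy statement I would argue that any exact optimal selection is feasible for the rounded DP, and its rounded log-sum differs from the true $p'$ by at most $N\delta$, since at most $N$ terms are selected. Because $10^{(\cdot)}$ is Lipschitz on the bounded interval $[p_{min},p_{max}]$, with $|10^{x}-10^{y}|\le 10^{p_{max}}(\ln 10)\,|x-y|$, the DP value for the corresponding bucket is within $10^{p_{max}}(\ln 10)\,N\delta$ of the true optimum, and the same bound controls the error in the reverse direction, so that choosing $M\ge 10^{p_{max}}(\ln 10)\,N/\varepsilon$ guarantees an $\varepsilon$-accurate optimum. The running time is $O(R_i\cdot N\cdot M\cdot \max_j|\mathcal{A}_j|)$, which is polynomial for any fixed target accuracy $\varepsilon$.

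The main obstacle is this error control: since the coefficients $\log(1-p_{ija})$ are in general irrational, the exact set of attainable values of $p'$ is not a tractable finite state space, so rounding is unavoidable, and one must verify that the accumulated rounding error over up to $N$ selection steps — rather than a single step — remains within tolerance and is amplified only by the bounded Lipschitz constant of $10^{(\cdot)}$. A secondary point to check is that the structural constraints \eqref{const_ppi:AoI} and \eqref{const_ppi:rmax} are faithfully encoded by the $n$-index together with ranging $a$ over $\mathcal{A}_j$ in each transition, so that no pair $(j,a)$ is counted twice and the size cap $N$ is never exceeded.
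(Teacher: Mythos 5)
Your proposal is correct and follows essentially the same route as the paper: both quantize the bounded range of $p'$ (equivalently, scale and round the coefficients $-\log(1-p_{ija})$), run a cardinality-bounded multiple-choice knapsack DP over the contents in $\mathcal{R}_i$, and recover the optimum by post-processing the $10^{p'}$ term over all quantized values --- the paper merely phrases the DP state as a residual knapsack capacity $w$ rather than your accumulated bucket index $m$. Your explicit Lipschitz-based error analysis is a welcome addition that the paper omits; just note that the required number of buckets should carry the factor $(p_{max}-p_{min})$, i.e., $M\ge 10^{p_{max}}(\ln 10)\,N\,(p_{max}-p_{min})/\varepsilon$, since the per-term rounding error is $\delta/2=(p_{max}-p_{min})/(2M)$.
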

\begin{proof}
We first perform two prepossessing steps, and then apply DP to the resulting problem.
First, as the objective function is minimization and $p'$ is a continuous variable, constraint~\eqref{const_ppi:eq} can be stated equivalently as a greater-than-or-equal constraint.
We re-express the constraint as:
\begin{equation}\label{ineq:11c}
\begin{aligned}
\sum_{j\in \mathcal{R}_i}\sum_{a\in \mathcal{A}_j}-\log(1-p_{ija})z_{ja}\ge \log\left(h_is_i(c_s-c_b)\right)-p'
\end{aligned}
\end{equation}
Since $p'\in [p_{min},p_{max}]$, the minimum and maximum values that the right-hand-side of the constraint can take are zero and $\sum_{j\in \mathcal{R}_i}\sum_{a\in \mathcal{A}_j}-\log(1-p_{ija})$, respectively.

Second, the problem  can be solved to any desired accuracy (though not exactly the optimum), by quantizing the interval of $p'$ into $W$  steps; this corresponds to multiplying the coefficients with some number $M$ and rounding.
Let $W=M\sum_{j\in \mathcal{R}_i}\sum_{a\in \mathcal{A}_j}-\log(1-p_{ija})$ denote the maximum value of the right-hand-side of \eqref{ineq:11c} after multiplying it by $M$. 
Similarly let $q_{ja}=-M\log(1-p_{ija})$ for $j\in\mathcal{R}_i$, $a\in\mathcal{A}_j$.
Note that the minimum value 
that the right-hand-side can take is still zero. Hence, $p'\in[0,W]$. 

After these two steps, \eqref{ppi} can be re-expressed as \eqref{ppi_re}.
Formulation \eqref{ppi_re} resembles  an inversed multiple-choice knapsack problem with an upper bound \eqref{const_ppi_re:rmax} on the number of items. The difference is that we have $p'$ as one additional variable with a term in the objective function. The selection of $p'$ affects the right-hand-side, corresponding to changing the knapsack capacity. Knapsack problem is solved via DP efficiently. The interesting point is that DP provides not only the optimal function value of $z$ with the given capacity, but also those for all intermediate values starting from zero, implying that one computation is enough to examine the effect of all $p'$ values. Then the optimum can be obtained by post processing considering the function term with~$p'$.
\end{proof}
\vspace{-7mm}
\begin{figure}[H]
\begin{subequations}\label{ppi_re}
\begin{alignat}{2}
\centering
&\min 10^{p'}+\sum_{j\in\mathcal{R}_i}\sum_{a\in\mathcal{A}_j}\beta^*_{ija}z_{ja}\label{ppi_re:obj}\\
\text{s.t}. \quad
&\sum_{j\in \mathcal{R}_i}\sum_{a\in \mathcal{A}_j}q_{ja}z_{ja}\ge W \label{const_ppi_re:capa}\\
& \sum_{a\in\mathcal{A}_j}z_{ja}\le1,j\in\mathcal{R}_i\label{const_ppi_re:AoI}\\
& \sum_{j \in \mathcal{R}_i}\sum_{a\in \mathcal{A}_j}z_{ja}\le N\label{const_ppi_re:rmax}\\
&z_{ja}\in\{0,1\},j \in \mathcal{R}_i,a\in\mathcal{A}_{j}\\
&p'\in[0,W]
\end{alignat}
\end{subequations}
\end{figure}
\vspace{-5mm}
The DP algorithm for solving PP$^{(i)}$ is shown in Algorithm~\ref{alg:DPA}.
Lines~\ref{ini_dp0}-\ref{ini_dp1} are the initialization steps.
Lines~\ref{while1_dp0}-\ref{while1_dp1} solves \eqref{ppi_re} with maximum capacity $W$ in which matrix $\bm{B}^*$ is the optimal cost matrix. Entry $B^*(w,j,n)$ represents the cost of the optimal solution when up to $n$ contents of the first $j$ contents can be in the recommendation set with a knapsack capacity $w\in[0,W]$. Matrix $\bm{A^*}$ is an auxiliary matrix that stores the AoI corresponding to optimum for each tuple $(w,j,n)$ where $w\in[0,W]$, $j=1,..,R_i$, and $n=1,2,...,\min\{j,N\}$.
Lines~\ref{while2_dp0}-\ref{while2_dp1} perform the post processing step. Namely for each intermediate value $p'\in [0,W]$, the corresponding objective function value is calculated and compared to the minimum value found so for, in order to find the global minimum of the problem. The complexity of this algorithm is of $O(WR_iNA_i)$. 
The column generation algorithm for solving SP$_2^{(t)}$ is shown in Algorithm~\ref{alg:CGA} in which Algorithm~\ref{alg:DPA} is used for solving PP$^{(i)}$, $i\in\mathcal{I}$.

\vspace{5mm}
\begin{algorithm}[!ht]
\caption{Column generation for SP$_2^{(t)}$}\label{alg:CGA}
\begin{algorithmic}[1]
\STATE Initialize $\mathcal{C}'_{i}$~for~$i\in \mathcal{I}$
\STATE $\text{Stop}\leftarrow\text{False}$
\WHILE{$\text{Stop}=\text{False}$}
\STATE Solve RMP~\eqref{sp2t} and obtain dual optimum values $\bm{\pi}$ and $\bm{\beta}$
\STATE $\text{Stop}\leftarrow\text{True}$
\FOR{$i\in\mathcal{I}$}
\STATE Solve PP$^{(i)}$ by Algorithm~\ref{alg:DPA}
\IF{the reduced cost$<0$}
\STATE Stop $\leftarrow$ False
\STATE Add the column to $\mathcal{C}'_i$
\ENDIF
\ENDFOR
\ENDWHILE                                                                   
\end{algorithmic}
\end{algorithm}

\begin{algorithm}[!ht]
\caption{Dynamic programming for PP$^{(i)}$}\label{alg:DPA}
\begin{algorithmic}[1]
\STATE Create matrix $B^*$ of size $(1+W)\times (1+R_i)\times (1+N)$\label{ini_dp0}
\STATE Create matrix $A^*$ of size $(1+W)\times (1+R_i)\times (1+N)$
\STATE $B^*[0,j,n]\leftarrow 0$ for~ any $j$ and $n$
\STATE $B^*[w,0,n]\leftarrow \infty$ for~any $w>0$ and any $n$
\STATE $w\leftarrow 1$,~$\text{Stop}\leftarrow\text{False}$ \label{ini_dp1}
\WHILE{$\text{Stop}=\text{False}$}\label{while1_dp0}
\FOR{$j=1,...,R_i$}
\FOR{$n=1,...,\min\{j,N\}$}
\STATE $B^*(w,j,n)\leftarrow\underset{a\in\mathcal{A}_j}\min\{\beta_{ija}+B^*(w',j-1,n-1),B^*(w,j-1,n')\}$,
$A^*(w,j,n)\leftarrow\underset{a\in\mathcal{A}_j}{\arg\min}\{\beta_{ija}+B^*(w',j-1,n-1),B^*(w,j-1,n')\}$,\\
 where \\
 $w'=\max\{0,w-q_{ja}\}$ and $n'=\min\{j-1,n\}$
\ENDFOR
\ENDFOR
\IF{$B^*(w,R_i,N)=\infty~\text{or}~w=W$}
\STATE  $\text{Stop}\leftarrow\text{True}$
\ELSE
\STATE $w\leftarrow w+1$
\ENDIF
\ENDWHILE \label{while1_dp1}
\STATE  $\text{OPT}\leftarrow\infty$,~$w\leftarrow1$,~$\text{Stop}\leftarrow\text{False}$  \label{while2_dp0}
\WHILE{$\text{Stop}=\text{False}$}
\STATE $q^*\leftarrow0$,~$w'\leftarrow w$,~$j'\leftarrow R_i$,~$n'\leftarrow N$
\IF{$B^*(w',j',n')=\infty$~or~$w>W$}
\STATE $\text{Stop}\leftarrow\text{True}$
\ELSE
\WHILE{$j'\ge1$~and~$n'\ge1$}
\IF {$B^*(w',j',n')<B^*(w',j'-1,n')$}
\STATE $a^*\leftarrow A^*(w',j',n')$
\STATE $w'\leftarrow \min\{0,w'-q_{j'a^*}\}$
\STATE $q^*\leftarrow q^*+q_{j'a^*}$
\STATE $n'\leftarrow n'-1$
\ELSE
\STATE $n'\leftarrow \min\{j'-1,N\}$
\ENDIF
\STATE $j'\leftarrow j'-1$
\ENDWHILE
\STATE $p\leftarrow \log\left(h_is_i(c_s-c_b)\right)-q^*/M$
\IF{$10^p+c_bs_ih_i-\pi^*_{i}+B^*(w,R_i,N)<\text{OPT}$ }
\STATE $\text{OPT}\leftarrow {10}^p+c_bs_ih_i-\pi^*_{i}+B^*(w,R_i,N)$
\ENDIF
\ENDIF
\STATE $w\leftarrow w+1$
\ENDWHILE \label{while2_dp1}
\end{algorithmic}
\end{algorithm}

\subsection{Attaining Integer Feasible Solutions}
The solutions of the two subproblems will likely violate some original constraints, and we present an approach to generate feasible solutions based on SP$_2$.
We take the solutions of SP$_2^{(t)}$, $t\in\mathcal{T}$, and ``repair'' them in order to construct an integer solution for COPRA. 
The reason of using SP$_2^{(t)}$, $t\in\mathcal{T}$ is that its solution contains the information of recommendation sets, 
and hence it resembles more a solution to the original problem.
However, these solutions do~not respect the AoI evolution of contents across  the time slots as each SP$_2^{(t)}$, $t\in\mathcal{T}$, is solved independently from the others. The repairing algorithm~(RA) is shown in Algorithm~$\ref{alg:RA}$, which consists of three main steps.
 In the algorithm, symbol $\leftarrow$ is used to indicate the assignment of a value.
Symbol $\leftleftarrows$ is used to indicate that an assigned value of an
optimization variable is kept fixed subsequently.

In the first step, we take the solution of SP$_2^{(t)}$, $t\in\mathcal{T}$, and perform an iterative rounding
process on the $y$-variables to obtain an integer solution. More specifically, we first fix the current $y$-variables having value one, 
followed by fixing the variable with the largest fractional value to one if there is enough capacity and zero otherwise. We then solve SP$_2^{(t)}$ again. Now, if the
solution is integer, we stop. Otherwise, this process is repeated until
an integer solution is obtained. Obviously, in the worst case, $I$ iterations are needed to obtain an integer solution. Denote by $\hat{\bm{y}}=\{\hat{y}_{ti}:t \in \mathcal{T}~\text{and}~i\in\mathcal{I}\}$ the obtained values of $y$-variables of each SP$_2^{(t)}$ for $t\in\mathcal{T}$.
This step corresponds to Lines~$\ref{step1:start}$-$\ref{step1:end}$ in Algorithm~$\ref{alg:RA}$.

In the second step, we utilize $\hat{\bm{y}}$ 
as input to the optimization problem stated in \eqref{repair}. 
Therein, the $y$-variables have the same meaning as defined earlier in Section~\ref{subsec:problem_formulation}.
Solving~\eqref{repair} provides a caching solution 
in which the AoI evolution of contents across time slots are respected. 
The objective function is maximization, to encourage setting the $\bm{y}$-variables to be as similar to $\hat{\bm{y}}$ as possible.
Here, $\epsilon$ is a
small positive number, to encourage caching contents even if $\hat{y}$ is zero.
This step corresponds to Line~$\ref{step2}$ in Algorithm~$\ref{alg:RA}$.

\begin{figure}[!ht]
\begin{subequations}\label{repair}
\begin{alignat}{2}
&\max\limits_{\bm{y},\bm{x}\in\{0,1\}}\quad
\sum_{t\in\mathcal{T}}\sum_{i \in\mathcal{I}}(\epsilon+h_{ti}s_i\hat{y}_{ti})y_{ti} \label{obj_repair}\\
&\text{s.t.}~~~~\eqref{const_ilp:AoI1}, \eqref{const_ilp:AoI2}, \eqref{const_ilp:AoI3},
\eqref{const_ilp:cacheSize},\eqref{const_ilp:backhaulSize}\nonumber
\end{alignat}
\end{subequations}
\end{figure}

After these two steps, we have  a complete caching solution over time slots. Finally, for each non-cached content item,
we choose the $N$ highest related cached contents as its recommendation set.
This step corresponds to Lines~$\ref{step3:start}$-$\ref{step3:end}$ in Algorithm~$\ref{alg:RA}$.
 We remark that formulation~\eqref{repair} is an integer program. However in practice this is solved rapidly. Moreover, the repairing operation does not need to be done in every iteration of subgradient optimization.

\begin{algorithm}[ht!]
\textbf{Input}:  SP$_{2}^{(t)}$ for $t \in\mathcal{T}$ \\
\textbf{Output}: An integer solution for COPRA\\
\caption{RA for constructing integer solutions}\label{alg:RA}
\begin{algorithmic}[1]
\FOR{$t\in\mathcal{T}$}\label{step1:start}
\WHILE{(exists $y_{ti}$ with fractional value)}
\STATE $y_{ti}\leftleftarrows 1$ if $y_{ti}=1$
\STATE $g=\max\{y_{ti}:0<y_{ti}<1\}$
\STATE $j=\arg\max\{y_{ti}:0<y_{ti}<1\}$
\STATE $\Phi \leftarrow L$~\textbf{if}~($t=1$)~\textbf{else}~ $\Phi\leftarrow S$
\STATE $y_{tj} \leftleftarrows 1$~\textbf{if}~($s_j+\underset{\substack{i\in\mathcal{I}\\ y_{ti}=1}}\sum s_i\le \Phi$)~\textbf{else}~$y_{tj} \leftleftarrows 0$
\STATE Solve SP$_2^{(t)}$
\ENDWHILE
\ENDFOR 
\STATE $\hat{\bm{y}}\leftarrow\{y_{ti}:t\in\mathcal{T},i\in\mathcal{I}\}$\label{step1:end}
\STATE Solve formulation~\eqref{repair} and obtain the values of $\bm{y}$ \label{step2}
\FOR{$t\in\mathcal{T}$}\label{step3:start}
\FOR{$i\in \mathcal{I}: y_{ti}=0$}
\STATE $\mathpzc{c}\leftarrow$~the~first~$N$~elements~in~$ \{(j,a):x_{tja}=1,  j\in \mathcal{R}_i, a\in\mathcal{A}_{tj}\}$
 with the highest relations to $i$ 
\ENDFOR
\ENDFOR \label{step3:end}
\end{algorithmic}
\end{algorithm}

\begin{algorithm}
\caption{The main steps of LDA}\label{alg:LDA}
\begin{algorithmic}[1]
\STATE Initialize $K$,~$\epsilon_1$,~and $\epsilon_2$\label{lda:init_0}
\STATE $\bm{\lambda}\leftarrow \bm{0}$,~$k\leftarrow 1$\label{lda:init_1}
\STATE $\text{LBD}\leftarrow 0$,~$\bar{w}\leftarrow \infty$\label{lda:init_2}
\REPEAT 
\STATE Solve SP$_1^{(i)}$ for $i \in \mathcal{I}$ and obtain $\bm{x}^{(k)}$\label{lda:sp1}
\STATE Solve SP$_2^{(t)}$ for $t \in \mathcal{T}$ by Algorithm~$\ref{alg:CGA}$ and obtain $\bm{x}'^{(k)}$\label{lda:sp2}
\STATE Calculate $L(\bm{\lambda}^{(k)})$  which is \eqref{ld_obj} \label{lda:lag}
\STATE \textbf{if}~$L(\bm{\lambda}^{(k)})>$LBD~\textbf{then}~LBD$\leftarrow \label{lda:lbd} L(\bm{\lambda}^{(k)})$
\STATE Apply Algorithm~$\ref{alg:RA}$ to obtain an integer solution and its  objective function value $U$\label{lda:repair}
\STATE \textbf{if}~$U<\bar{w}$~\textbf{then}~$\bar{w} \leftarrow U$\label{lda:u}
\STATE Calculate $\boldsymbol{\lambda}^{(k+1)}=\bm{\lambda}^{(k)}+t^{(k)}{\bm{d}}^{(k)}$ where $t^{(k)}=\eta\frac{\bar{w}-L(\bm{\lambda}^{(k)})}{{||\bm{d}^{(k)}}||^2}$,
$\bm{d}^{(k)}={\bm{x}}^{(k)}-{\bm{x}'}^{(k)}$\label{lda:update_lambda}
\STATE $k\leftarrow k+1$\label{lda:increase_k}
\UNTIL{$||\bm{d}^{(k)}||>\epsilon_1$ and $||{\bm{\lambda}^{(k+1)}-\bm{\lambda}^{(k)}||}>\epsilon_2$                                                                      and~$k>K$} \label{lda:stop}
\end{algorithmic}
\end{algorithm}

\subsection{Algorithm Summary}
The main steps of LDA is shown in Algorithm~\ref{alg:LDA}.
Line~\ref{lda:init_0} initialize the total number of iterations $K$ to perform, and tolerance parameters $\epsilon_1$ and $\epsilon_2$.
Lines~\ref{lda:init_1} and \ref{lda:init_2} initialize the vector of Lagrangian multiplier $\bm{\lambda}$ to $0$, the iteration counter $k=1$, the lower bound LBD to zero, and the best found solution $\bar{w}$ to $\infty$. Lines~\ref{lda:sp1} and \ref{lda:sp2} solve the SP$_1^{(i)}$ for $i\in\mathcal{I}$ and SP$_2^{(t)}$ for $t\in\mathcal{T}$, respectively. Lines~\ref{lda:lag} and \ref{lda:lbd} calculate the Lagrangian function value, and update the LBD if a higher lower bound is found. Lines~\ref{lda:repair} finds a solution for the problem, and then Line~\ref{lda:u} updates the current upper bound if a solution with lower objective function value is obtained. Line~\ref{lda:update_lambda} updates the Lagrange multipliers, and Line~\ref{lda:increase_k} increases the iteration counter by one. Finally, Line~\ref{lda:stop} checks whether a stopping criterion is met.

\section{Performance Results}
In this section, we present performance evaluation 
results of LDA and GA.
We first consider small-size problem instances, and evaluate 
the performances of LDA and GA by comparing them to the global optimum obtained from ILP~\eqref{ilp}.
We report the (relative) deviation from the optimum, referred to as the optimality gap.
For large-size problem instances, it is computationally difficult to
obtain global optimum. Instead, we use the LBD derived from
LDA as the reference value. This is a
valid comparison because the deviation with respect to the global
optimum will never exceed the deviation from the LBD. We will
see that, numerically, using the LBD remains accurate in evaluating
 optimality.

The content
popularity is modeled by a ZipF distribution,
i.e., the probability where the $i$-th content is requested
is $\frac{i^{-\gamma}}{\sum_{i\in\mathcal{I}}i^{-\gamma}}$ \cite{2013_kar,2020_ahani}. Here $\gamma$ is the shape parameter and it is set to $\gamma= 0.56$~\cite{2013_kar}. 
The sizes of content items are generated within interval $[1,10]$. We have set
the the cache capacity to $50\%$ of the total size of content items,
i.e., $C = 0.5\sum_{i\in\mathcal{I}}s_i$. The capacity of backhaul link is set to
$L = \rho\sum_{i\in\mathcal{I}}s_i$ where parameter $\rho$ steers the backhaul capacity in
relation to the total size of content items. 
The probability of accepting a related content is generated in interval $[0.6,1)$.
The maximum AoI that a content can take is set to two.
We use content-specific and time-specific functions including linear and nonlinear ones from the literature~\cite{2017_sun,2019_sun} to model
the AoI cost of content items. Specifically, for each content,
one of the following functions is randomly selected: $f_{tia}=1+\alpha_{ti}a$, $f_{tia}=\frac{1}{1-\alpha_{ti}a}$, and $f_{tia}=e^{\alpha_{ti}a}$.
The functions are made content-specific and time-specific by varying parameter $\alpha_{ti}$. We remark
that the performance of LDA remains largely the same if only one type of function is used
for all contents. The use of multiple functions is to show that the
algorithm works in general with diverse functions.
We will vary parameters
$I$, $T$, and $\rho$, and study their impact on the overall cost and algorithm performance.
For each input setup, we have generated 10 problem instances and we report the average cost.

Figure~\ref{item_small_recomVSnorecom} shows the total cost returned by LDA when recommendation is utilized, and LDA with no recommendation (denoted by LDC-NC).  The figure shows that, interestingly, the total cost decreases by more than $50\%$ with recommendation. Another interesting point is that the reduction is even more when the number of content items increases. From this result, the consideration of recommendation optimization is relevant.

\begin{figure}[ht]
\centering
  \includegraphics[scale=0.40]{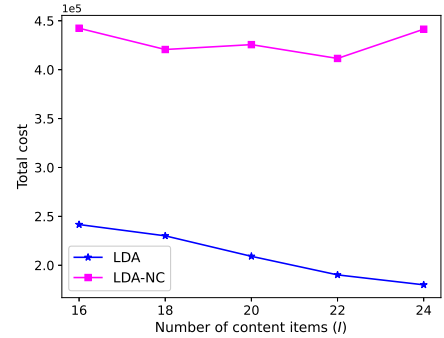}
  \caption{Impact of $I$ on total cost when $T=12$, $S=0.5\sum_{i\in\mathcal{I}}s_i$, $L=0.3\sum_{i\in\mathcal{I}}s_i$, and $\gamma=0.56$. The blue and pink lines show the total cost with and without recommendation, respectively.}\label{item_small_recomVSnorecom}
\end{figure}

Figures~\ref{impact:I_small}-\ref{impact:rho_small} and Figures~\ref{impact:I_large}-\ref{impact:rho_large} show the performance results for the small-size and large-size problem instances, respectively. In Figures~\ref{impact:I_small}-\ref{impact:rho_small}, the green line represents the global optimum computed using ILP~\eqref{ilp}. In Figures~\ref{impact:I_large}-\ref{impact:rho_large}, the black line represents the LBD obtained from LDA. In all figures, the blue and red lines represent the overall cost returned by LDA and GA, respectively.  The deviation from global optimum for LDA is within a few percent, while for GA it is significantly larger. Moreover, the results for both small-size and large-size problem instances are consistent.

Figure~\ref{impact:I_small} shows the impact of content items on the total cost for small-size problem instances. The overall cost slightly decreases with the number of contents. This is due to the fact that the capacity of cache is set relatively to the total. Namely, with larger number of contents, more capacity is available, and hence more opportunity to serve content requests from the cache. This effect, however, can not be seen for large problem instances due to a saturation effect, see Figure~\ref{impact:I_large}. As can be seen the cost has fluctuations due to instable solutions of GA. For small-size problems, the optimality gap of GA is about $57\%$, while for LDA it is about $7\%$ from global optimum. For large-size problems, the performance of LDA remains the same, while that of GA increases to $70\%$. Intuitively, the reason is that with larger number of items, the problem becomes too difficult for a simple algorithm such as GA.

Figures~\ref{impact:T_small} shows the impact of time slots for small-size problem instances. As can be seen, the cost increases with number of time slots. Apparently, this is because with more time slots, there are more requests to serve, and hence higher cost. GA has an optimality gap around $60\%$, while for LDA the gap is only $8\%$. The results for large-size problems are shown in Figure~\ref{impact:T_large}. LDA consistently shows good performance, whereas the results of GA are very sub-optimal. It is worth noting that the optimality gaps of both LDA and GA slightly increase with the number of time slots. 

Figure~\ref{impact:rho_small} shows the impact of $\rho$ on the total cost. Larger $\rho$ means
higher backhaul capacity. The costs of both LDA and GA decrease sharply when $\rho$ increases from $10\%$ to $20\%$, then the decrease slows down due to a saturation effect. The optimality gap of LDA is $17.5\%$ when $\rho=10\%$. This is because when the backhaul capacity is extremely limited, very few content items can be updated in a time slot, and as a result even one or two sub-optimal choices would largely impact the performance. When $\rho$ increases to $20\%$, the cost significantly decreases and the optimality gap decreases as well to $7.8\%$. For higher value of $\rho$, the gap slightly decreases further and stays around $7\%$. For GA the deviation from optimality is high no matter $\rho$ is small or not. Similar trends can be seen for large-size problems, see Figure~\ref{impact:rho_large}. Note that in the figure it may not be clear that the result of LDA and LBD both decrease with $\rho$. To show this, we have plotted a subfigure in the middle-right section of the figure.

\begin{figure}[t]
\centering
\includegraphics[scale=0.40]{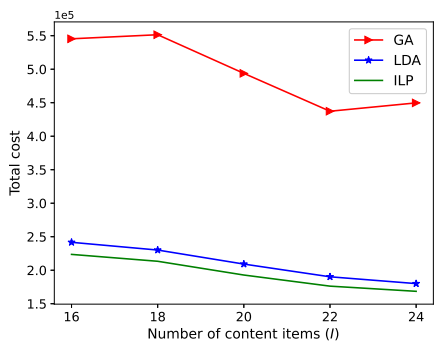}
\caption{Impact of $I$ on total cost when $T=6$, $S=0.5\sum_{i\in\mathcal{I}}s_i$, $L=0.3\sum_{i\in\mathcal{I}}s_i$, and $\gamma=0.56$.}\label{impact:I_small}
\end{figure}

\begin{figure}[t]
\centering
\includegraphics[scale=0.40]{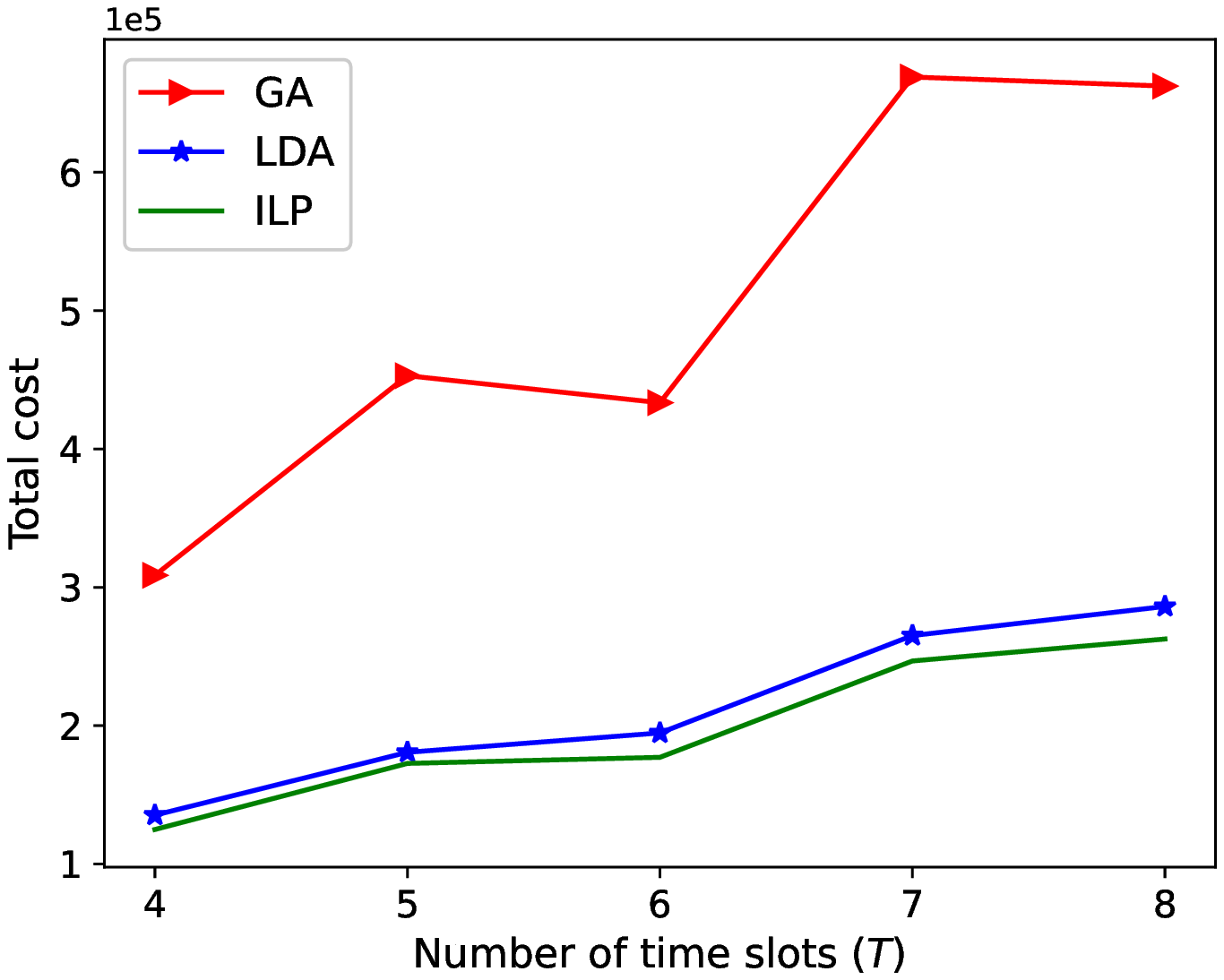}
\caption{Impact of $T$ on total cost when $I=20$, $S=0.5\sum_{i\in\mathcal{I}}s_i$, $L=0.3\sum_{i\in\mathcal{I}}s_i$, and $\gamma=0.56$.}\label{impact:T_small}
\end{figure}

\begin{figure}[t]
\centering
\includegraphics[scale=0.40]{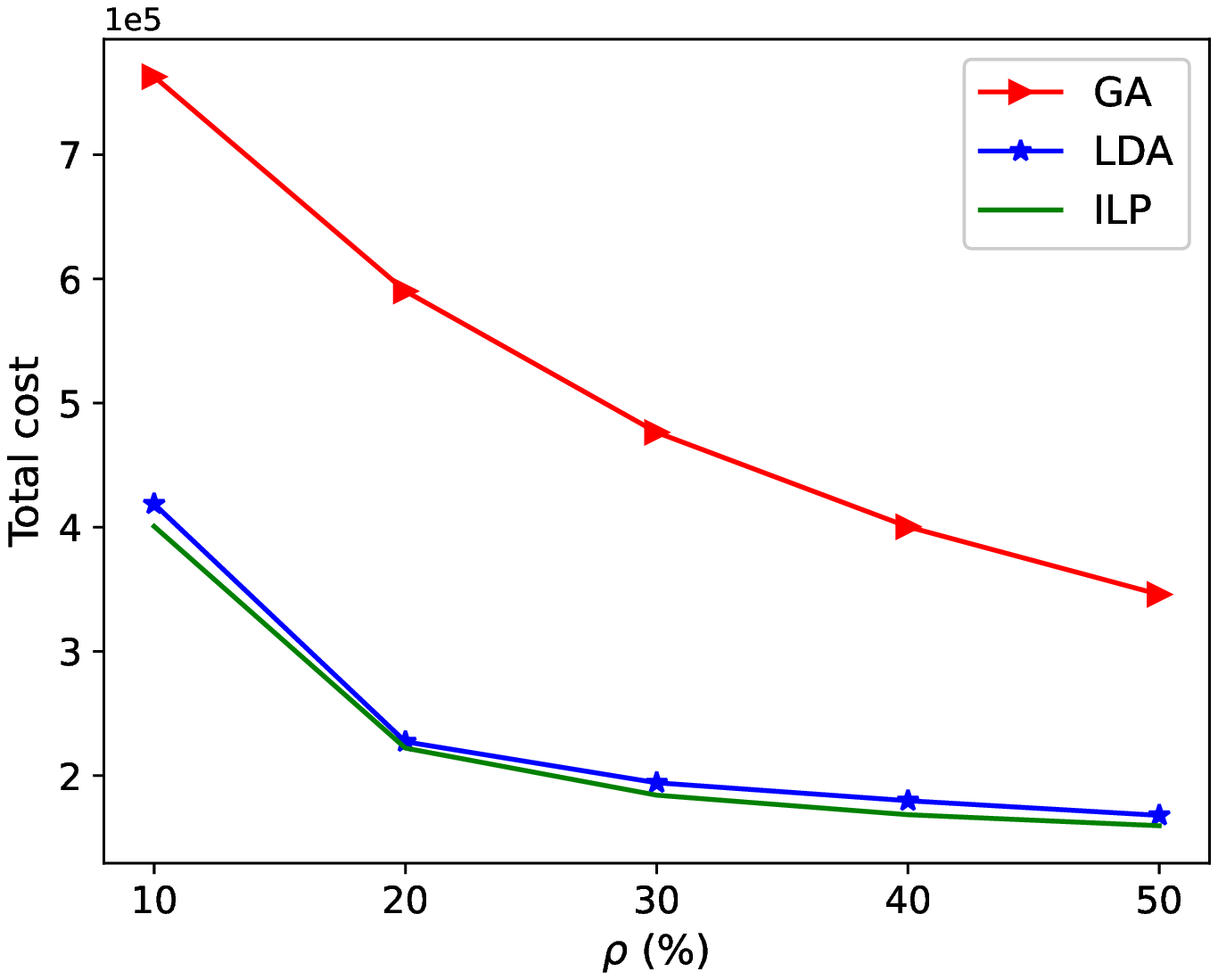}
\caption{Impact of $\rho$ on total cost when $I=20$, $T=6$, $S=0.5\sum_{i\in\mathcal{I}}s_i$, and $\gamma=0.56$.}\label{impact:rho_small}
\end{figure}

\begin{figure}[t]
\centering
\includegraphics[scale=0.40]{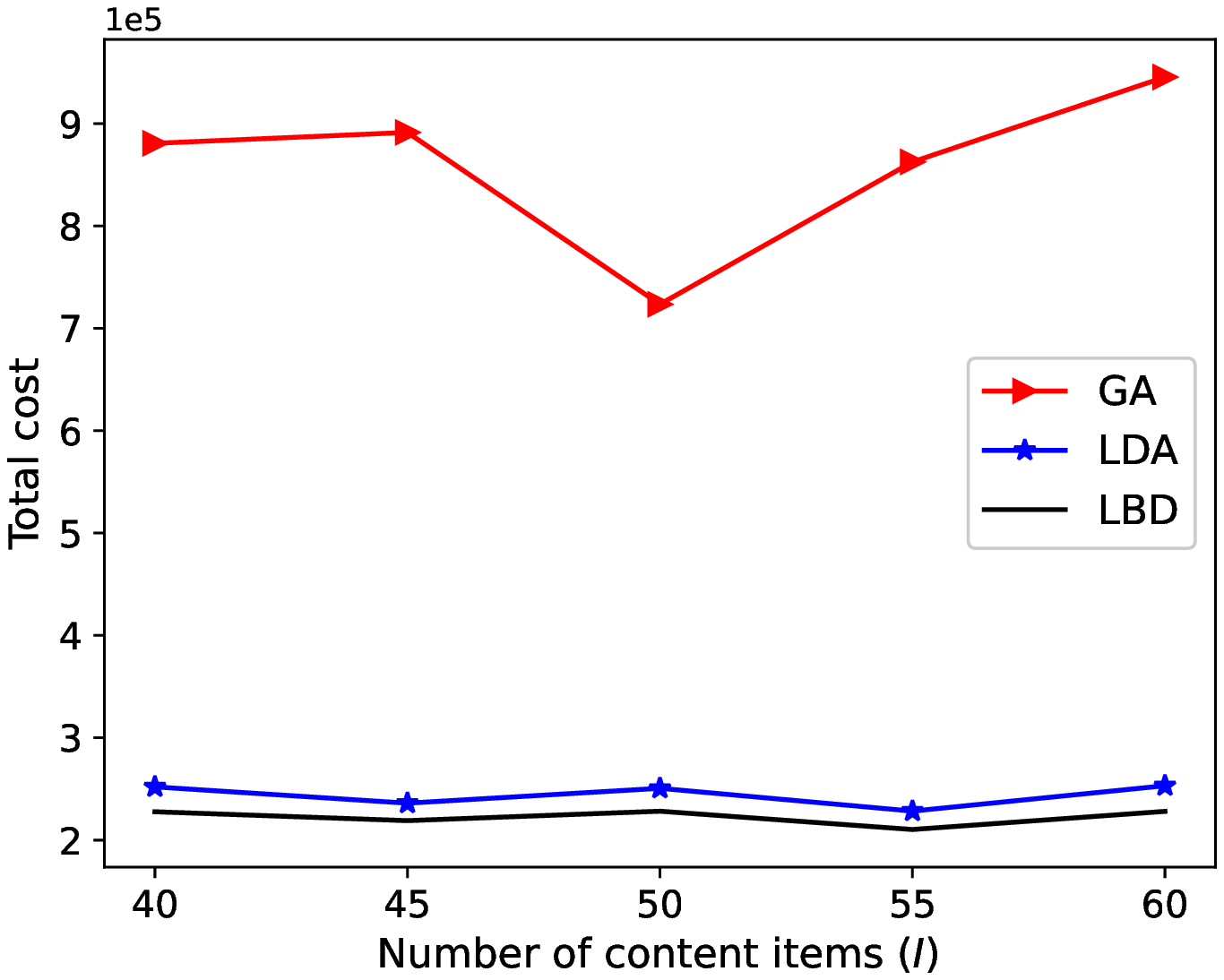}
\caption{Impact of $I$ on total cost when $T=12$, $S=0.5\sum_{i\in\mathcal{I}}s_i$, $L=0.3\sum_{i\in\mathcal{I}}s_i$, and $\gamma=0.56$.}\label{impact:I_large}
\end{figure}

\begin{figure}[t]
\centering
\includegraphics[scale=0.40]{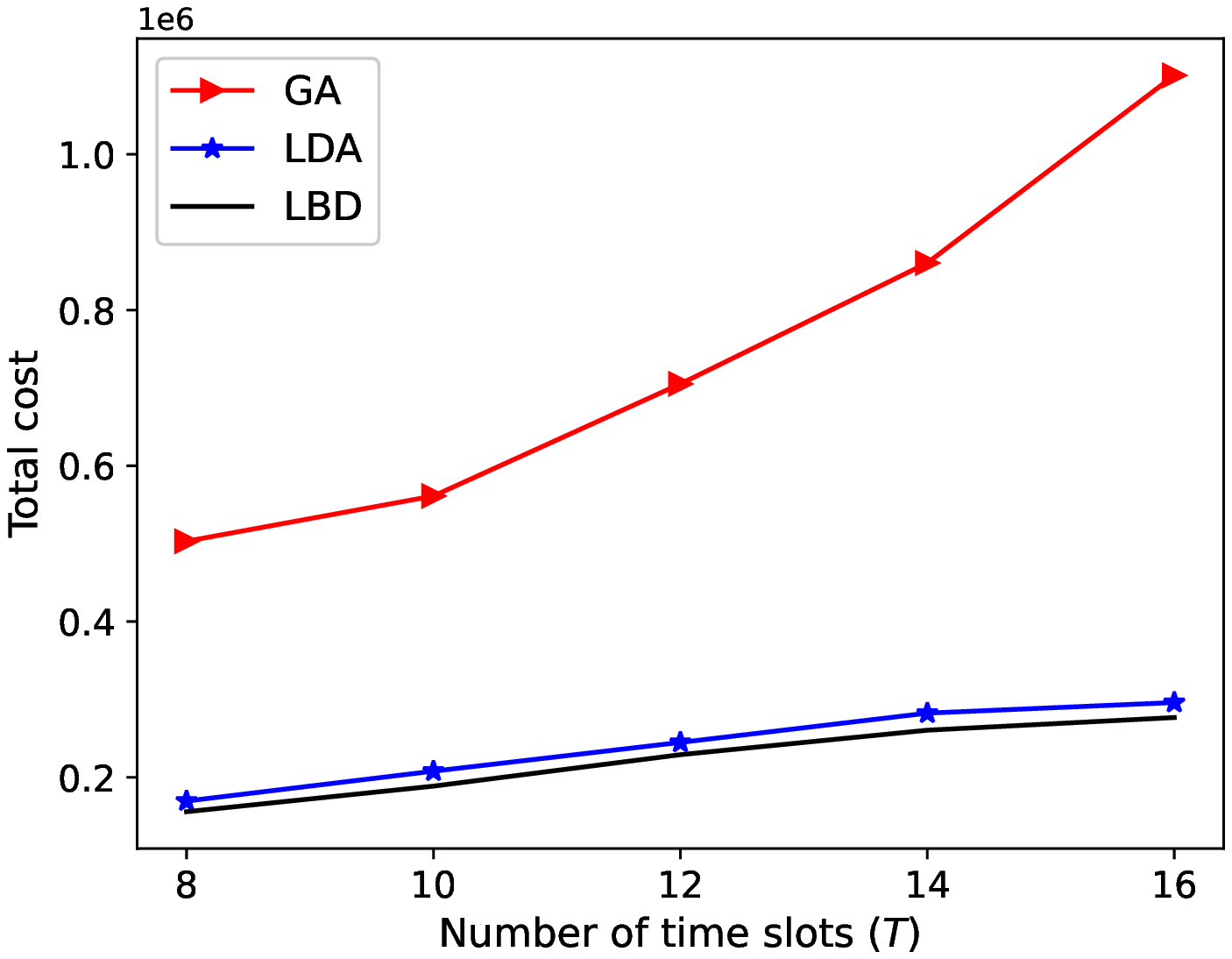}
\caption{Impact of $T$ on total cost when $I=50$, $S=0.5\sum_{i\in\mathcal{I}}s_i$, $L=0.3\sum_{i\in\mathcal{I}}s_i$, and $\gamma=0.56$.}\label{impact:T_large}
\end{figure}

\begin{figure}[t]
\centering
\includegraphics[scale=0.40]{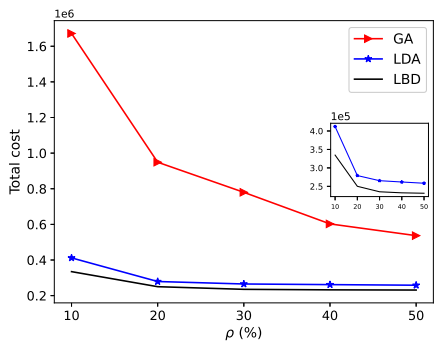}
\caption{Impact of $\rho$ on total cost when $I=50$, $T=12$, $S=0.5\sum_{i\in\mathcal{I}}s_i$, and $\gamma=0.56$.}\label{impact:rho_large}
\end{figure}

\section{Conclusions}
We have studied optimal scheduling of cache updates where  AoI of contents and recommendation are jointly taken into account. With both AoI and recommendation, the problem is hard even for one single time slot. 
We formulated the problem as an integer liner program (ILP). The ILP
provides optimal solutions, but it is not practical to large problem instances.
Simple algorithms are not likely to be effective, and this finding is obtained via the poor performance of a greedy algorithm (GA).
To arrive at good solutions efficiently, one has to analyze and exploit the structure of this optimization problem. We achieve this by the Lagrangian decomposition algorithm (LDA) that allows for decomposition for handling large-scale problem instances.
LDA decomposes the problem into several subproblems where each of them can be solved efficiently.
The algorithm provides solutions within a few percentage from global optimality. 

\bibliographystyle{IEEEtran}
\bibliography{IEEEabrv,ForIEEEBib}

\begin{IEEEbiography}[{\includegraphics[width=1in,height=1.25in,clip,keepaspectratio]{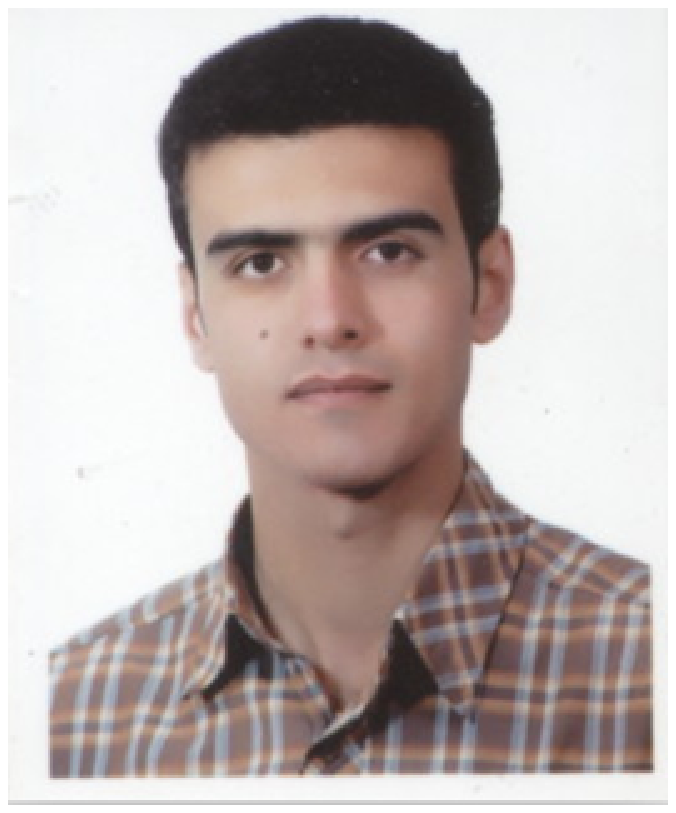}}]
{Ghafour Ahani} received the B.Sc. and the M.Sc. degrees in applied mathematics from the University of Kurdistan, Sanandaj, Iran, in 2008 and 2010 respectively. He is currently pursuing the Ph.D. degree with the IT department, Uppsala University, Sweden. His research interests include mathematical optimization applied to networks and resource allocation in communication systems.
\end{IEEEbiography}

\begin{IEEEbiography}[{\includegraphics[width=1in,height=1.25in,clip,keepaspectratio]{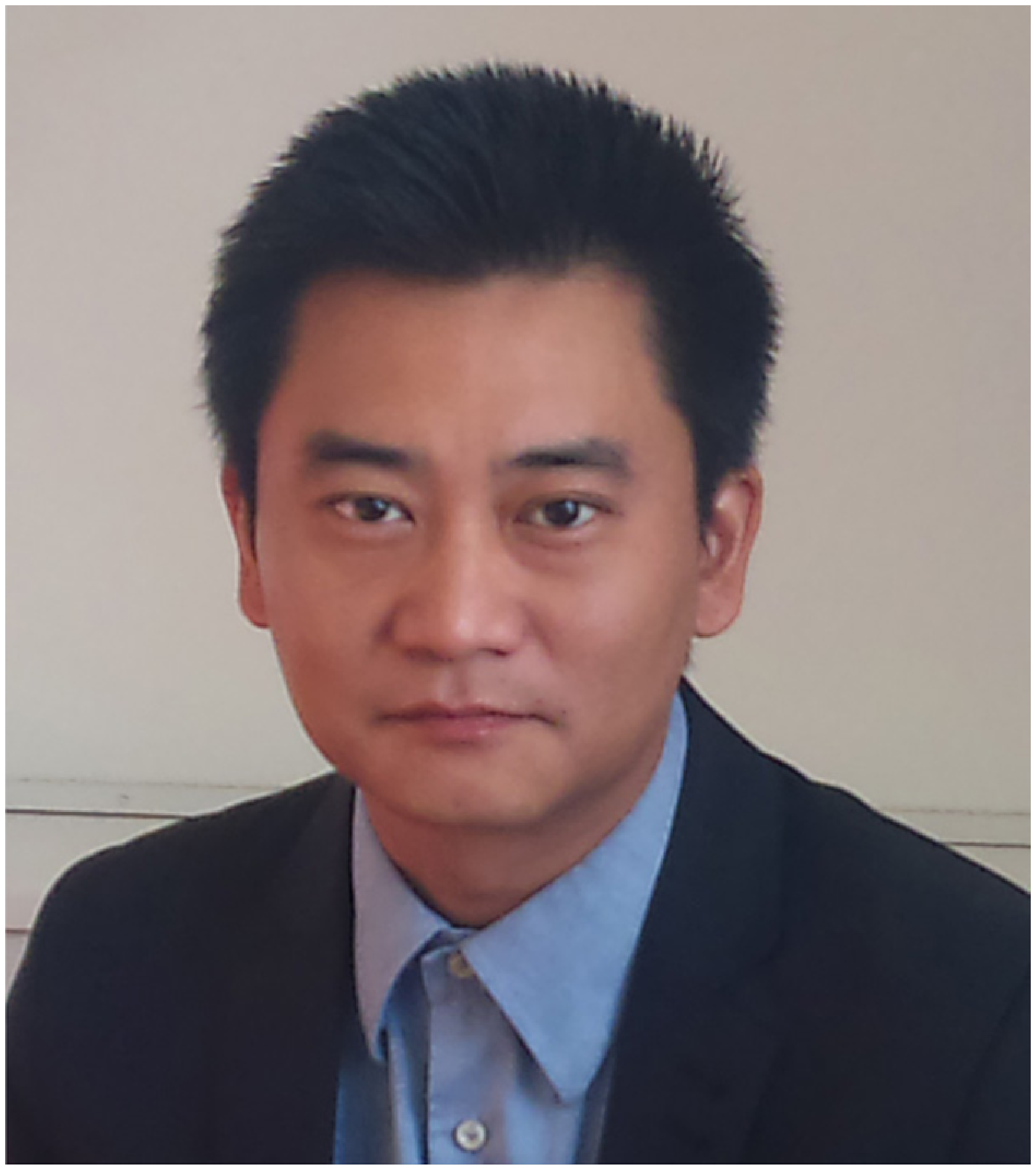}}]
{Di Yuan} (SM'16)  received his MSc degree in Computer Science and Engineering, and PhD degree in Optimization at Linköping Institute of Technology in 1996 and 2001, respectively. After his PhD, he has been associate professor and then full professor at the Department of Science and Technology, Linköping University, Sweden. In 2016 he joined Uppsala University, Sweden, as chair professor. His current research mainly addresses network optimization of 4G and 5G systems, and capacity optimization of wireless networks. Dr Yuan has been guest professor at the Technical University of Milan (Politecnico di Milano), Italy, in 2008, and senior visiting scientist at Ranplan Wireless Network Design Ltd, United Kingdom, in 2009 and 2012.  In 2011 and 2013 he has been part time with Ericsson Research, Sweden. In 2014 and 2015 he has been visiting professor at the University of Maryland, College Park, MD, USA. He is an area editor of the Computer Networks journal. He has been in the management committee of four European Cooperation in field of Scientific and Technical Research (COST) actions, invited lecturer of European Network of Excellence EuroNF, and Principal Investigator of several European FP7 and Horizon 2020 projects. He is a co-recipient of IEEE ICC’12 Best Paper Award, and supervisor of the Best Student Journal Paper Award by the IEEE Sweden Joint VT-COM-IT Chapter in 2014.
\end{IEEEbiography}
\end{document}